\newtheorem{theorem}{Theorem}
\newtheorem{lemma}{Lemma}
\newtheorem{corollary}{Corollary}
\newcolumntype{Y}{>{\centering\arraybackslash}X}
\DeclareMathOperator*{\argmin}{argmin\,}
\newcommand\norm[1]{\left\lVert#1\right\rVert}
\newcommand{\revision}[1]{{#1}}
\begin{document}

\title{MIMO Channel Estimation using Score-Based Generative Models}

\author{Marius Arvinte, \IEEEmembership{Member, IEEE}, and Jonathan I. Tamir, \IEEEmembership{Member, IEEE}
\thanks{M. Arvinte was with the Electrical and Computer Engineering Department, University of Texas at Austin, TX, USA, and is now with Intel Corporation. (arvinte@utexas.edu). J. I. Tamir is with the Chandra Family Department of Electrical and Computer Engineering Department, the Department of Diagnostic Medicine, and the Oden
Institute, University of Texas at Austin, TX, USA (jtamir@utexas.edu). This work was supported by ONR grant N00014-19-1-2590, NSF IFML 2019844, and a gift made by InterDigital, an affiliate of the 6G@UT center within the Wireless Networking and Communications Group at The University of Texas at Austin. A portion of the ideas in this manuscript have appeared as a workshop paper at WCNC 2022. \copyright2022 IEEE. Personal use of this material is permitted. Permission
from IEEE must be obtained for all other uses, in any current or future media, including reprinting/republishing this material for advertising or promotional purposes, creating new collective works, for resale or redistribution to servers or lists, or reuse of any copyrighted component of this work in other works.}
}

\markboth{To Appear in IEEE Transactions on Wireless Communications}%
{Shell \MakeLowercase{\textit{et al.}}: A Sample Article Using IEEEtran.cls for IEEE Journals}

\maketitle
\begin{abstract}
Channel estimation is a critical task in multiple-input multiple-output (MIMO) digital communications that substantially effects end-to-end system performance. In this work, we introduce a novel approach for channel estimation using deep score-based generative models. A model is trained to estimate the gradient of the logarithm of a distribution and is used to iteratively refine estimates given measurements of a signal. We introduce a framework for training score-based generative models for wireless MIMO channels and performing channel estimation based on posterior sampling at test time. We derive theoretical robustness guarantees for channel estimation with posterior sampling in single-input single-output scenarios, and experimentally verify performance in the MIMO setting. Our results in simulated channels show competitive in-distribution performance, and robust out-of-distribution performance, with gains of up to $5$ dB in end-to-end coded communication performance \revision{compared to supervised deep learning methods}. \revision{Simulations} on the number of pilots show that high fidelity channel estimation with $25$\% pilot density is possible for MIMO channel sizes of up to $64 \times 256$. Complexity analysis reveals that model size can efficiently trade performance for estimation latency, and that the proposed approach is competitive with compressed sensing in terms of floating-point operation (FLOP) count.
\end{abstract}

\begin{IEEEkeywords}
Deep Learning, Generative, Score-based, Diffusion, MIMO, Channel Estimation.
\end{IEEEkeywords}

\section{Introduction}
\IEEEPARstart{M}{assive} MIMO represents a key technology in fifth generation (5G) and envisioned sixth generation (6G) communication systems \cite{viswanathan2020communications,heng2021six}, promising to increase communication reliability by orders of magnitude without increasing bandwidth requirements. With the deployment of millimeter wave (mmWave) band communications, recovering accurate, high-dimensional channel state information (CSI) using reduced pilot overhead has become a major open research problem \cite{he2018deep,balevi2020high,hong2021role}. For example, reflective intelligent surfaces -- equipped with up to hundreds of antennas -- are an active area of research for high-dimensional wireless systems \cite{shlezinger2021dynamic,zheng2022survey}, and channel estimation has been recently investigated using optimization approaches \cite{jensen2020optimal}, and data-driven methods \cite{zheng2019intelligent,liang2021reconfigurable}. Estimating accurate CSI with data-driven methods is an important area of research for future communication systems that integrate artificial intelligence in physical layer processing \cite{letaief2019roadmap,viswanathan2020communications}, including being taken into consideration for standardization \cite{han2020artificial}.

An important challenge for estimation algorithms is their robustness to test-time distributional shifts \cite{koh2021wilds} that naturally occur when the test environment no longer matches the algorithm design conditions. This is present in wireless communication systems, especially at the user side, where the propagation conditions may change from indoor to outdoor \cite{palacios2017tracking}, whenever the user is moving. An important question is whether deep learning-aided channel estimation can retain its performance, both from theoretical, and practical perspectives \cite{hu2020deep}. To this end, the main motivation of this work is to develop robust, data-driven, deep learning based MIMO channel estimation algorithms for high-dimensional communication scenarios.

In this paper, inspired by recent results that show the potential of score-based (diffusion) generative models for specialized applications such as magnetic resonance imaging (MRI) reconstruction \cite{jalal2021robust,song2022solving}, we introduce a training and inference algorithm for wireless channel estimation using score-based generative models in a single-carrier, point-to-point MIMO communication scenario. We use a distribution learning approach for modeling high-dimensional, mmWave MIMO channels in a stochastic environment. We model the log-distribution of channels by learning the high-dimensional gradient -- known as the \textit{score}. To learn the score of the distribution, we use score-based generative models, originally introduced in \cite{song2019generative}, and that have so far been primarily demonstrated on natural image benchmark datasets \cite{dhariwal2021diffusion}. For training, a database of known channels is used to train a score-based generative model in an unsupervised manner, that is independent of the pilot symbols. During test time, we perform probabilistic channel estimation by sampling from the posterior distribution \revision{using annealed Langevin dynamics}. We tackle the challenges that arise when performing channel estimation in an out-of-distribution setting, \revision{i.e.,} to environments not seen during training, as well as in a very wide (up to $40$ dB) signal-to-noise ratio (SNR) range \revision{and in interference scenarios}. \revision{Numerical simulations} show that using the proposed approach, estimation performance is retained even when the simulated test channels come from a different distribution than the one during training, and gains of up to $5$ dB in energy-per-bit to noise ratio ($E_b/N_0$) are achieved against competing deep learning methods in coded end-to-end communication system simulations. We also investigate the computational complexity of the proposed approach, and identify promising future research directions in developing more compact and efficient deep neural networks to reduce estimation latency.

\subsection{Related Work}
Modern statistical wireless channel modeling stems from the Saleh-Valenzuela clustered channel model \cite{saleh1987statistical}, which characterizes a propagation environment through a sum of impulse responses with stochastic delays and amplitudes. Extensions for MIMO channels have added angles of departure and arrival on both ends of the communication link \cite{spencer2000modeling}, and have included beamforming effects in the modeling \cite{rappaport2012broadband}. Standardized clustered delay line (CDL) models for cellular communications are extremely flexible and can model wireless channels in four dimensions (time, frequency, transmit antennas, and receive antennas), and are adopted in 5G specifications \cite{3gpp.38.901}. In this work, we use the well-established CDL family of stochastic environments to evaluate the performance of MIMO channel estimation algorithms. 
Beyond their usage for synthetic channel generation and performance evaluation, statistical channel models have been used to aid compressed sensing (CS) for MIMO channel estimation, based on an assumed low rank channel model in the beamspace domain \cite{venugopal2017channel}. The Lasso is a fast estimation method that imposes sparsity in the two-dimensional Fourier (2D-DFT) domain through $\ell_1$-regularization in the channel estimation objective. \revision{The EM-GM-AMP algorithm \cite{schniter2014channel} uses the approximate message passing (AMP) algorithm as a backbone in recovering a Gaussian mixture of impulses in the beamspace domain, from a number of undersampled and noisy measurements. Along the same line,} the atomic norm decomposition \cite{bhaskar2013atomic} and its extension to fast approximate atomic norm decomposition (fsAD) \cite{zhang2017atomic} use knowledge of the antenna array shape and the underlying structure of the CDL models (sparsity in the continuous 2D angular domain) to achieve competitive channel estimation results. In contrast to CS methods, score-based models do not assume a specific channel model, and thus are likely amenable to use in real-world wireless propagation environments where sparsity and low rank assumptions may not always hold \cite{molisch2016millimeter}.

The work in \cite{balevi2020high} is an application of the compressed sensing with generative models (CSGM) framework \cite{bora2017compressed} to wireless channel estimation. This method trains a deep generative adversarial network (GAN) with a low-dimensional latent space using a training set of channels, and at test-time formulates an optimization problem to recover the channel state information matrix using the received pilot symbols and the pre-trained model. In contrast to GANs, score-based generative models do not use adversarial training, and make less restrictive assumptions about the low-dimensional nature of the wireless channels, instead aiming to learn the score of the distribution even in regions with a low probability density. \revision{The recent work in \cite{yang2022learning} introduces an unsupervised learning approach that jointly learns the best hyper-parameters and a refinement model for optimal precoding matrices. This is similar to the iterative denoising approach done with Langevin dynamics.} Conditional generative models have been used to learn the effects of unknown channels \cite{o2019approximating}, where a deep model is trained to map transmitted symbols to received symbols. The goal of this approach is to model transmitter and receiver effects (e.g., nonlinear power amplification or analog-to-digital conversion) alongside channel effects, which is useful for learning black-box approximations to entire communication chains \cite{ye2020deep}.

Supervised end-to-end training of deep learning based methods has been successfully used for wireless MIMO channel estimation. The algorithm proposed in \cite{soltani2019deep} introduces a two-stage deep learning approach for two-dimensional channel estimation: in the first stage, a super-resolution network is used, followed by a denoising stage. While originally introduced for time-frequency channels, this approach is applicable to MIMO channels as well, where an initial estimate of the channel matrix is treated as an image \cite{huang2018deep}. The work in \cite{he2018deep} introduces a powerful and robust deep learning algorithm in the form of the learned denoising approximate message passing (L-DAMP) algorithm \cite{metzler2017learned}. In L-DAMP, differentiable optimization steps are interleaved with forward passes of a deep neural network, and the entire method is trained end-to-end using back-propagation.

Score-based generative models -- closely related to diffusion models \cite{song2021maximum} -- are introduced in \cite{song2019generative}, where the authors also introduce annealed Langevin dynamics and how to efficiently sample from a distribution of interest using a learned score model. These models have been recognized to surpass generative adversarial networks (GANs) at modeling real-world distributions, such as natural images \cite{dhariwal2021diffusion}. The work in \cite{song2020improved} introduces practical techniques for improving the training of score-based models, which we leverage and adapt for wireless channels. The work in \cite{jalal2021instance} theoretically proves the in-distribution optimality of posterior sampling and links this to the full-dimensional support coverage capability of a model.

\subsection{Contributions}
Our contributions in this work are the following:
\begin{itemize}
    \item We propose a posterior sampling based solution for MIMO channel estimation. Given received pilots, at test (inference) time the algorithm samples from the posterior distribution of channel state information conditioned on the received pilots. In the training phase, a denoising score matching framework learns the score (the gradient of log-prior distribution) of high-dimensional MIMO channels. During inference, we use the learned score-based model in conjunction with the received pilots to iteratively update the channel estimate and perform posterior sampling (Algorithm~\ref{alg:inference}).
    \item We derive a closed-form expression that describes robustness guarantees for multi-tap, single-input single-output (SISO) channel estimation with posterior sampling in an out-of-distribution setting, under assumptions of mutual independence between the tap gains and locations. The derived expression bounds the probability of successful recovery using the \textit{mismatch-to-noise} ratio. This predicts the behaviour of the proposed method as a function of the SNR and the distributional mismatch between training and testing environments.
    \item We perform \revision{numerical simulations} showing the performance of score-based models and a set of diverse baselines on the CDL family of channel models \cite{3gpp.38.901} at mmWave frequencies. Our main findings are that score-based generative models are near-optimal in-distribution, \revision{and are suitable in scenarios where high-fidelity channel estimation is required and an increased computational complexity is acceptable, such as mmWave-based fixed access or backhaul}. We evaluate the complexity of the proposed approach, and find that it favourably scales in terms of FLOP count and estimation latency when the system size increases, compared to CS-based approaches. To facilitate reproducible deep learning research, we release open-source code for the experiments\footnote{\url{https://github.com/utcsilab/score-based-channels}}.
\end{itemize}

\subsection{Notation and Organization}
We use boldfaced letters $\mathbf{x}, \mathbf{X}$ to denote vectors and matrices, respectively, and $x_{i,j}$ to denote the $(i,j)$-th element of a matrix. We use italic letters to denote their random variable counterparts. The notation $p_X( \mathbf{X})$ represents the continuous probability distribution function of the matrix random variable $X$, evaluated at the matrix $\mathbf{X}$. We use $\mathbf{X}^\mathrm{H}$ to denote the Hermitian (conjugate transpose) of $\mathbf{X}$. The matrix $\mathbf{I}$ denotes the identity matrix of appropriate size. We use $\norm{\cdot}_F$ to denote the Frobenius norm of a matrix, $\norm{\cdot}_{1,1}$ to denote the $\ell_1$-norm of a vectorized matrix, $\mathcal{O}$ to denote proportionality up to a multiplicative constant, and $\log$ to denote the natural logarithm. We use $\nabla f(\mathbf{X})$ to denote the derivative of a function $f$ with respect to a value $\mathbf{X}$, as the matrix with the $(i,j)$-th entry given by $\partial f( \mathbf{X}) / \partial x_{i, j}$. \revision{We use $dX$ to denote an infinitesimal change in a random process $X$.}

The remainder of the paper is structured as follows: Section~\ref{sec:system} introduces the MIMO system model and defines the score of a distribution. Section~\ref{sec:proposed} introduces the training and inference (testing) stages of the proposed approach, while Section~\ref{sec:theory} introduces our theoretical derivations that characterize the probability of successful estimation. Section~\ref{sec:experiments} presents \revision{numerical simulation results} and discussions, and Section~\ref{sec:conclusion} concludes the paper.

\section{Preliminaries}
\label{sec:system}
\subsection{Wireless System Model}
\label{subsec:system_model}
We consider a narrowband, point-to-point MIMO communication scenario between a transmitter and receiver with $N_\mathrm{t}$ and $N_\mathrm{r}$ antennas, respectively. Propagation in this scenario is characterized by the channel state information matrix $\mathbf{H} \in \mathbb{C}^{N_\mathrm{r}\times N_\mathrm{t}}$. We let $\mathbf{p}_i \in \mathbb{C}^{N_\mathrm{t}}$ denote the $i$-th pilot symbol chosen from a pre-designed codebook of $N_\mathrm{p}$ pilots, and $\mathbf{W} \in \mathbb{C}^{N_\mathrm{r} \times N_\mathrm{r}}$ be the receive beamforming matrix. The received signal vector for the $i$-th pilot $\mathbf{y}_i$ is given by:
\begin{equation}
    \mathbf{y}_i = \mathbf{W}^\mathrm{H}(\mathbf{H} \mathbf{p}_i s + \mathbf{n}_i),
    \label{eq:basic_model}
\end{equation}
\noindent where $\mathbf{n}_i$ is complex additive white Gaussian noise with zero mean and power covariance matrix $\sigma_\mathrm{pilot}^2 \mathbf{I}$, and $s$ is a complex-valued scalar. In practice, the pilot vectors are selected as entries from a beamforming codebook with structural constraints \cite{heath2016overview}. For the remainder of the paper, we make no assumptions on the structure of $\mathbf{p}_i$. We assume that $s=1$, $\mathbf{W} = \mathbf{I}$ (fully digital receiver), and that $\mathbf{p}_i$ is constrained to have unit amplitude and low-resolution phase -- each entry of $\mathbf{P}$ is a randomly chosen (fixed for all test samples) quadrature phase shift keying (QPSK) symbol. Assuming that the channel state information is constant across $N_\mathrm{p}$ pilot transmissions, we obtain the matrix model:
\begin{equation}
    \mathbf{Y} = \mathbf{H}\mathbf{P} + \mathbf{N}.
    \label{eq:matrix_model}
\end{equation}
\noindent Channel estimation requires estimating the channel state information matrix $\mathbf{H}$ using the received pilot matrix $\mathbf{Y}$, while having knowledge of the transmitted pilot matrix $\mathbf{P}$. The latter part is common in communication standards, where pilot sequences are pre-specified \cite{heath2016overview}. Let $\alpha = N_\mathrm{p} / N_\mathrm{t}$ denote the pilot density. When $\alpha < 1$, then there are $N_\mathrm{r} N_\mathrm{p} < N_\mathrm{r} N_\mathrm{t}$ received pilots, and channel estimation is an under-determined inverse problem.

\subsection{\revision{Langevin Dynamics}}
\label{subsec:langevin_dynamics}
\revision{Langevin dynamics are a process for sampling from arbitrary distributions. Let $p_H$ be an arbitrary, high-dimensional probability distribution and $H_t$ be a Langevin diffusion process on the same probability space as $p_H$, defined by the dynamics \cite{roberts1996exponential}:}
\begin{equation}
    \revision{\mathrm{d} H_t = \mathrm{d} W_t + \frac{1}{2} \nabla \log p_H ( H_t ),}
\label{eq:continuous_langevin}
\end{equation}
\noindent \revision{where $W_t$ is an i.i.d. zero mean and unit variance Gaussian random process. A remarkable property of the Langevin diffusion process is that the stationary distribution of $H_t$ converges in probability to $p_H$ as $t \rightarrow \infty$, regardless of the initial value $\mathbf{H}_0$, assuming that certain smoothness conditions for $p_H$ are met \cite{roberts1996exponential}.} \revision{Running the process in \eqref{eq:continuous_langevin} for a given initial value is thus a way to sample from the distribution $p_H$. In practice, numerical machine computations require finite precision and updates to $\mathbf{H}$. The discretized Langevin dynamics update equation with step size $\epsilon$ is given by \cite{roberts1996exponential}:}
\begin{equation}
    \revision{\mathbf{H}_{t+1} \leftarrow \mathbf{H}_t + \epsilon \cdot \nabla \log p_H ( \mathbf{H}_t ) + \sqrt{2\epsilon} \cdot \zeta_t },
\label{eq:discrete_langevin}
\end{equation}
\noindent \revision{where $\zeta_t$ is i.i.d. Gaussian noise with zero mean and unit variance.} \revision{The work in \cite{roberts1996exponential} analyzes the convergence rate of \eqref{eq:discrete_langevin} and finds that convergence (also termed \textit{mixing}) can be slower than the continuous process in \eqref{eq:continuous_langevin}. To alleviate slow mixing times, practical modifications have been proposed to \eqref{eq:discrete_langevin}. In this work, we leverage \textit{annealed} Langevin dynamics \cite{song2019generative}, where time-varying hyper-parameters $\alpha_t$ and $\beta_t$ are introduced to improve convergence and yield the update rule:}
\begin{equation}
    \revision{\mathbf{H}_{t+1} \leftarrow \mathbf{H}_t + \alpha_t \cdot \nabla \log p_H ( \mathbf{H}_t ) + \beta_t \cdot \zeta_t.}
    \label{eq:prior_sampling}
\end{equation}
\noindent \revision{The above consists of two additive update terms to the current iterate $\mathbf{H}_t$:}
\begin{itemize}
    \item \revision{$\alpha_t \cdot \nabla \log p_H ( \mathbf{H}_t )$ increases the likelihood of the current sample. Because the gradient is the local direction of steepest ascent, this update term guides $\mathbf{H}_t$ to a more plausible sample under the distribution $p_H$, i.e., towards a point of the distribution with higher density.}
    \item \revision{$\beta_t \cdot \zeta_t$ represents a perturbation to the above process. If the hyper-parameter $\beta_t$ is chosen correctly, this perturbation allows for sample diversity and prevents always sampling the mode of the distribution \cite{jalal2021instance}.}
\end{itemize}
\revision{Given a target criterion and a set of validation samples, these hyper-parameters can be tuned using an approximate grid search approach. The only issue that remains is how can one obtain query access to $\nabla \log p_H ( \mathbf{H} )$, for arbitrary $\mathbf{H}$. This is the central purpose of score-based generative modeling and described next.}

\subsection{The Score of a Distribution}
An important assumption of this paper is that wireless channels are sampled from a distribution. This is a realistic assumption for real-world wireless environments, that has been extensively used in both theoretical \cite{weichselberger2006stochastic} and practical works \cite{o2019approximating}. Let $p_H$ denote the distribution of complex-valued MIMO channels for an arbitrary, stochastic environment introduced in Section~\ref{subsec:system_model}. The \textit{score} of $p_H$ at $\mathbf{H}$ is defined as \cite{vincent2011connection}:
\begin{equation}
    \psi_H( \mathbf{H} ) =  \nabla \log p_H(\mathbf{H}),
\end{equation}
\noindent where $\psi_H ( \mathbf{H} ) \in \mathbb{C}^{N_\mathrm{r} \times N_\mathrm{t}}$. 
While the score function can be used to sample from $p_H$ using annealed Langevin dynamics, it is generally intractable for non-trivial distributions, including realistic models of wireless channels. The result in \cite{vincent2011connection} shows that it is possible to learn an approximation to the score function using an unsupervised formulation. For the remainder of this work, we do not make any explicit assumptions on $p_H$ or $\psi_H$. That is, we do not assume the existence of a low-dimensional (sparse) representation of MIMO channels, and the proposed recovery approach does not use explicit information about $p_H$ (the channel statistics in a given environment) or $\psi_H$, instead fully relying on a data-driven approach.

\subsection{Score-Based Generative Modeling}
\label{subsec:score_models}
The goal of score-based generative models is to learn the score function $\psi_H$ at all input points, for a channel distribution $p_H$. When training samples $\{\mathbf{H}_i\}_{i=1}^N$ and their corresponding scores $\{\psi_H ( \mathbf{H}_i )\}_{i=1}^N$ are available, explicit score matching uses the following loss function to train a model $s_\theta$ \cite{vincent2011connection}:
\begin{equation}
    \mathcal{L}_{\mathrm{ESM}, p_H} ( \theta ) = \mathbb{E}_{\mathbf{H} \sim p_H} \left[ \norm{s_\theta ( \mathbf{H} ) - \psi_H ( \mathbf{H} ) }_2^2 \right].
\label{eq:esm}
\end{equation}
The above can be minimized using any gradient descent method. However, in general, $\psi_H$ is intractable, and thus the loss objective in \eqref{eq:esm} cannot be used to learn a model $s_\theta$. Furthermore, if $\psi_H$ were tractable and computationally feasible to evaluate, a deep learning model that approximates it would have limited practical use. To circumvent this problem, the work in \cite{vincent2011connection} proposes to use \textit{denoising} score matching by synthesizing corrupted data samples $\tilde{\mathbf{H}}$ and learning the score of the conditional distribution $p_{\tilde{H} | H}$, using the following objective:
\begin{equation}
\label{eq:dsm}
    \mathcal{L}_\mathrm{DSM} ( \theta ) =
    \mathbb{E}_{\mathbf{H} \sim p_H, \tilde{\mathbf{H}} \sim p_{\tilde{H}} } \left[ \norm{s_\theta ( \tilde{\mathbf{H}} ) - \nabla \log p_{\tilde{H} | H} ( \tilde{\mathbf{H}} | \mathbf{H} ) }_2^2 \right].
\end{equation}
The work in \cite{vincent2011connection} proves the following theorem:
\begin{theorem}[Appendix from \cite{vincent2011connection}]
\label{thm:vincent}
Assuming that $\log p_{\tilde{H} | H} ( \tilde{\mathbf{H}} | \mathbf{H} )$ is differentiable with respect to $\tilde{\mathbf{H}}$, then the losses $\mathcal{L}_{\mathrm{ESM}, p_{\tilde{H}}}$ and $\mathcal{L}_\mathrm{DSM}$ are equivalent.
\end{theorem}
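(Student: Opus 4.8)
The plan is to show that $\mathcal{L}_{\mathrm{ESM}, p_{\tilde{H}}}(\theta)$ and $\mathcal{L}_\mathrm{DSM}(\theta)$ coincide up to an additive constant that does not depend on $\theta$; equivalence of the two loss landscapes (and hence of their minimizers) then follows immediately. First I would expand each squared Frobenius norm into three pieces. Writing $\langle\cdot,\cdot\rangle$ for the real Frobenius inner product on $\mathbb{C}^{N_\mathrm{r}\times N_\mathrm{t}}\cong\mathbb{R}^{2 N_\mathrm{r} N_\mathrm{t}}$, the explicit score-matching objective against $p_{\tilde H}$ becomes
\[
\mathcal{L}_{\mathrm{ESM}, p_{\tilde{H}}}(\theta) = \mathbb{E}_{\tilde{\mathbf H}\sim p_{\tilde H}}\norm{s_\theta(\tilde{\mathbf H})}_2^2 - 2\,\mathbb{E}_{\tilde{\mathbf H}\sim p_{\tilde H}}\left\langle s_\theta(\tilde{\mathbf H}),\frac{\partial \log p_{\tilde H}(\tilde{\mathbf H})}{\partial \tilde{\mathbf H}}\right\rangle + C_1,
\]
while the denoising objective of \eqref{eq:dsm} becomes
\[
\mathcal{L}_\mathrm{DSM}(\theta) = \mathbb{E}_{\tilde{\mathbf H}\sim p_{\tilde H}}\norm{s_\theta(\tilde{\mathbf H})}_2^2 - 2\,\mathbb{E}_{\mathbf H\sim p_H,\,\tilde{\mathbf H}\sim p_{\tilde H|H}}\left\langle s_\theta(\tilde{\mathbf H}),\frac{\partial \log p_{\tilde H|H}(\tilde{\mathbf H}|\mathbf H)}{\partial \tilde{\mathbf H}}\right\rangle + C_2,
\]
where $C_1, C_2$ are independent of $\theta$, and I have used that the marginal law of $\tilde H$ induced by $(\mathbf H\sim p_H,\tilde{\mathbf H}\sim p_{\tilde H|H})$ is exactly $p_{\tilde H}$, so the leading model-energy terms are literally identical. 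The theorem therefore reduces to matching the two cross terms.

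Second, I would establish the cross-term identity by rewriting the score of the marginal. Since $p_{\tilde H}(\tilde{\mathbf H}) = \int p_{\tilde H|H}(\tilde{\mathbf H}|\mathbf H)\,p_H(\mathbf H)\,d\mathbf H$, differentiating under the integral sign and using $\partial p_{\tilde H|H}/\partial\tilde{\mathbf H} = p_{\tilde H|H}\,\partial\log p_{\tilde H|H}/\partial\tilde{\mathbf H}$ yields
\[
\frac{\partial \log p_{\tilde H}(\tilde{\mathbf H})}{\partial \tilde{\mathbf H}} = \frac{1}{p_{\tilde H}(\tilde{\mathbf H})}\int p_{\tilde H|H}(\tilde{\mathbf H}|\mathbf H)\,p_H(\mathbf H)\,\frac{\partial \log p_{\tilde H|H}(\tilde{\mathbf H}|\mathbf H)}{\partial \tilde{\mathbf H}}\,d\mathbf H ,
\]
i.e. the ESM target is the posterior average of the DSM target. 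Substituting this into the ESM cross term, writing the outer expectation over $\tilde{\mathbf H}$ as $\int(\cdot)\,p_{\tilde H}(\tilde{\mathbf H})\,d\tilde{\mathbf H}$ so that the factor $1/p_{\tilde H}(\tilde{\mathbf H})$ cancels, and applying Fubini's theorem to exchange the $\mathbf H$ and $\tilde{\mathbf H}$ integrations, recovers precisely the joint expectation in the DSM cross term. Combined with the first step, $\mathcal{L}_{\mathrm{ESM}, p_{\tilde H}}(\theta) - \mathcal{L}_\mathrm{DSM}(\theta) = C_1 - C_2$, a $\theta$-independent constant, which proves the claim.

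I would note that the complex-valued nature of the channels plays no role here beyond the identification with $\mathbb{R}^{2 N_\mathrm{r} N_\mathrm{t}}$: all derivatives above are real gradients, and the differentiability hypothesis on $\log p_{\tilde H|H}(\tilde{\mathbf H}|\mathbf H)$ is exactly what licenses the pointwise manipulations. The main obstacle is analytic rather than algebraic — justifying the interchange of $\partial/\partial\tilde{\mathbf H}$ with $\int\,d\mathbf H$ (a Leibniz / dominated-convergence argument needing an integrable dominating function for the differentiated integrand) and ensuring $C_1$, $C_2$ and the cross terms are finite so that the statement is not vacuous. I would dispatch this by importing the standard regularity assumptions implicit in \cite{vincent2011connection}: $p_{\tilde H} > 0$ everywhere, sufficient integrability of $s_\theta$ and of the relevant densities and their $\tilde{\mathbf H}$-derivatives, and decay conditions eliminating boundary terms; for the Gaussian denoising kernel used in Section~\ref{sec:training} these hold automatically, so no extra care is needed in practice.
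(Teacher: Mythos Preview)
Your proof is correct and is precisely the argument from \cite{vincent2011connection} that the paper cites; note that the paper does not supply its own proof of this theorem but simply attributes it to Vincent's appendix, so your expansion into the model-energy, cross, and constant terms followed by the posterior-average identity for the marginal score matches the referenced proof exactly.
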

Theorem~\ref{thm:vincent} implies that the score of the perturbed distribution $p_{\tilde{H}}$ can be learned by using the objective in \eqref{eq:dsm}, as long as its probability distribution is differentiable. \revision{In particular, this allows the use of arbitrary noise distributions for training, and learning the score at arbitrarily perturbed inputs, when using a continuum of noise levels.} When the perturbation $\mathbf{Z}$ is i.i.d. Gaussian, with zero mean and covariance matrix $\sigma_z^2 \mathbf{I}$, we obtain that \cite{vincent2011connection}: 
\begin{equation}
    \nabla \log p_{\tilde{H} | H} ( \tilde{\mathbf{H}} | \mathbf{H} ) = -  \mathbf{Z} / \sigma_z^2.
\end{equation}
The work in \cite{song2019generative} proposes to train a single score-based model using a weighted version of the loss in \eqref{eq:dsm} at multiple noise levels, as well as a learnable model (in practice, a deep neural network) $s_\theta$ with parameters (weights) $\theta$. The loss function for this model is given by \cite{song2019generative}:
\begin{equation}
\label{eq:weighted_score_loss}
    \mathcal{L}_\mathrm{score} ( \theta ) = \mathbb{E}_{j, \mathbf{H} \sim p_H, \mathbf{Z}_j \sim p_{Z_j}} \left[ \sigma_{z_j}^2 \norm{s_\theta (\mathbf{H} + \mathbf{Z}_j ) + \frac{\mathbf{Z}_j}{\sigma_{z_j}^2}}_2^2 \right].
\end{equation}
The decision to weigh the predicted score at each noise level comes from formulating denoising score-matching as a variance-exploding (VE) diffusion process \cite{song2021maximum}. As $-\mathbf{Z} / \sigma^2$ tends towards infinity magnitude for small $\sigma$, weighting with $\sigma^2$ compensates for this and stabilizes learning. \revision{Finally, the learned score function $\psi_{\tilde{H} | H}$ is used to draw samples from the posterior distribution in the sequel.}

\subsection{Posterior Sampling Using Score Functions}
The formulation in Section~\ref{subsec:score_models} does not require any assumptions on $\mathbf{P}$, $\mathbf{Y}$ or $\sigma_\mathrm{pilot}$, and makes learning the score function an unsupervised task. To perform channel estimation with a learned model, we use \textit{posterior sampling} via annealed Langevin dynamics, with the update given by:
\begin{equation}
\label{eq:posterior_update}
    \mathbf{H} \leftarrow \mathbf{H} + \alpha \cdot \psi_{H|Y} ( \mathbf{H} | \mathbf{Y}) + \beta \cdot \zeta,
\end{equation}
\noindent where $\alpha$ and $\beta$ are decaying step sizes, potentially different than the ones in \eqref{eq:prior_sampling}. \revision{Using Bayes rule for $p_{H|Y}(\mathbf{H} | \mathbf{Y}) = \frac{p_{Y|H}(\mathbf{Y} | \mathbf{H}) \cdot p_{H} (\mathbf{H}) }{p_{Y} (\mathbf{Y})}$ and expanding the logarithm yields}:
\begin{equation}
    \revision{\log p_{H | Y} (\mathbf{H} | \mathbf{Y}) = \log p_{Y | H}  (\mathbf{Y} | \mathbf{H}) + \log p_{H} (\mathbf{H}) - \log p_{Y} (\mathbf{Y}).}
\end{equation}
\revision{Taking the gradient with respect to $\mathbf{H}$ on both sides, we have that $\nabla \log p_{Y} (\mathbf{Y}) = 0$ for all $\mathbf{Y}$, and $\psi_{H|Y} (\mathbf{H} | \mathbf{Y}) = \psi_{Y|H} (\mathbf{Y} | \mathbf{H}) + \psi_{H} (\mathbf{H})$. Replacing $\psi_{H|Y} (\mathbf{H} | \mathbf{Y})$ in \eqref{eq:posterior_update} yields}:
\begin{equation}
    \mathbf{H} \leftarrow \mathbf{H} + \alpha \cdot \psi_{Y|H} ( \mathbf{Y} | \mathbf{H} ) + \alpha \cdot \psi_H ( \mathbf{H} ) + \beta \cdot \zeta.
    \label{eq:posterior_sampling}
\end{equation}
Compared to \eqref{eq:prior_sampling}, the additional term $\alpha \cdot \psi_{Y|H}$ updates the current estimate of $\mathbf{H}$ in a direction where it becomes more consistent with the received pilots $\mathbf{Y}$. Taken together, the three updates in \eqref{eq:posterior_sampling} represent the core routine of the proposed channel estimation algorithm.

\section{Methods}
\label{sec:proposed}
There are two optimization problems that must be solved in order to estimate channels using score-based generative models. These are mapped to training and inference stages, respectively. \revision{The decoupling of these two stages is a key difference compared to supervised channel estimation methods, where information about pilots is assumed during training:}
\begin{enumerate}
    \item During the training stage, a score-based generative model $s_\theta$ is trained by minimizing the loss function described in Section~\ref{sec:training}. This stage only takes place once in the lifetime of a wireless device, e.g., offline, using a powerful computational server and a dataset of accurate channel measurements or simulated channel realizations.
    \item During the testing (inference) stage, channel estimation is formulated as an optimization problem and the iterative algorithm in Section~\ref{sec:inference} is used to solve it. This stage uses the pretrained score-based model in conjunction with the received pilots to recover CSI. The formulation in Algorithm~\ref{alg:sampling} is independent from the first stage, \revision{and could accommodate other impairments such as \revision{interference scenarios} or few-bit quantization of the received pilots}.
\end{enumerate}

\subsection{Training a Score-Based Model for Wireless Channels}
\label{sec:training}
\begin{figure}[!t]
\centering
\includegraphics[width=0.95\linewidth]{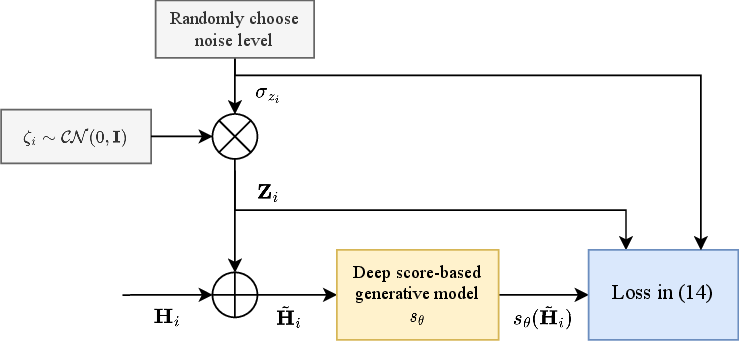}
\caption{Illustration of training flow for a single sample in a batch.}
\label{fig:score_training}
\end{figure}
In practice, a deep neural network with weights $\theta$ is used to learn a model $s_\theta$. \revision{We use the RefineNet architecture \cite{lin2017refinenet}, that operates at multiple resolution levels in parallel, and consists of a series of RefineNet blocks, with residual skips between them and a number of down- and up-sampling operations, respectively. This allows the model to learn structural relations in the CSI at multiple resolution levels, and to efficiently predict the score function. Furthermore, as the model is fully convolutional, it can handle channel matrices of dynamic input size, both during training and testing. The architecture of a RefineNet block is shown in detail in Figure~\ref{fig:score_architecture}.} Further details about the architectural details of RefineNet are available in the original paper \cite{lin2017refinenet}, as well as in our source code repository.
The loss used to train $s_\theta$ with a batch size of $B$ samples is given by the finite-sample version of \eqref{eq:weighted_score_loss} as:
\begin{equation}
    \mathcal{L}_\mathrm{train} ( \theta ) = \frac{1}{B} \sum_{i=1}^B \sigma_{z_i}^2 \norm{s_\theta ( \mathbf{H}_i + \mathbf{Z}_i ) + \frac{\mathbf{Z}_i}{\sigma_{z_i}^2}}_2^2,
\label{eq:batch_loss}
\end{equation}
\noindent where, at each training step, and for each sample in the batch we sample $\mathbf{Z}_i \sim \mathcal{CN} ( 0, \sigma_{z_i}^2 \mathbf{I} )$, where $\sigma_{z_i}$ is a noise level selected uniformly at random from a set of values $\{\sigma_{z_l}\}_{l = 1, \dots, L}$. That is, for each sample in a batch, noise at a randomly chosen noise level is added to the clean CSI matrix and the model is trained to predict the scaled negative noise that points away from the noisy sample $\tilde{\mathbf{H}}_i$ and towards the clean sample $\mathbf{H}_i$ -- this represents the score of the distribution of perturbed wireless channels.
\begin{figure}[!t]
\centering
\includegraphics[width=0.9\linewidth]{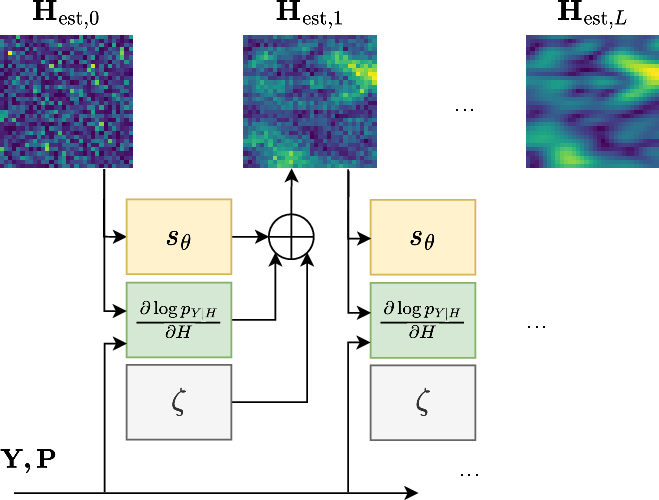}
\caption{Iterative inference procedure for channel estimation using a score-based model together with $\mathbf{Y}$ and $\mathbf{P}$.}
\label{fig:score_inference}
\end{figure}
Figure~\ref{fig:score_training} illustrates this effect for a perturbed sample, starting from the ideal channel matrix $\mathbf{H}_i$, and in which $\tilde{\mathbf{H}}_i$ is used as input to the deep neural network trained to predict $-\mathbf{Z}_i/\sigma^2_{z_i}$. Finally, note that \eqref{eq:batch_loss} does not use any information about \eqref{eq:matrix_model}, the power $\sigma_\mathrm{pilot}^2$ of the noise affecting the received pilots, or the pilot matrix $\mathbf{P}$ itself. \revision{The majority of existing deep learning approaches for MIMO channel estimation are supervised and use this information explicitly during training, which leads to overfitting to a specific measurement or noise distribution. In contrast, score-based generative modeling is unsupervised and does not require this information for training, making inference robust and usable across a wide range of SNR values and number of pilot symbols.}

\subsection{MIMO Channel Estimation via Posterior Sampling}
\label{sec:inference}
To perform channel estimation using score-based models, we resort to \textit{posterior sampling}: our solution comes in the form of a single sample from the posterior distribution $p_{H|Y} ( \cdot | \mathbf{Y} )$. To sample from the posterior, we use annealed Langevin dynamics \cite{song2019generative}. This is an iterative algorithm, that takes the following form at the $i$-th step:
\begin{equation}
    \mathbf{H}_{\mathrm{est},i+1} = \mathbf{H}_{\mathrm{est},i} + \alpha_i \cdot \nabla \log p_{H | Y} ( \mathbf{H}_{\mathrm{est},i} | \mathbf{Y} ) + \sqrt{2\beta \cdot \alpha_i} \cdot \sigma_{z_i} \cdot \mathbf{\zeta},
\end{equation}
\noindent where $\mathbf{\zeta} \sim \mathcal{CN} (0, \mathbf{I} )$ is randomly sampled at every update step, and $\alpha_i = \alpha_0 \cdot r^i$. The scalars $\alpha_0$, $\beta$ and $r$ are hyper-parameters discussed in Section~\ref{sec:experiments}. Using the expansion in \eqref{eq:posterior_sampling} yields:
\begin{align}
\begin{split}
\label{eq:abstract_update}
    \mathbf{H}_{\mathrm{est},i+1} = \mathbf{H}_{\mathrm{est},i} & + \alpha_i \cdot  \Big( \nabla \log p_{Y | H} ( \mathbf{Y} | \mathbf{H}_{\mathrm{est}, i} ) \\ & \qquad \quad + \nabla \log p_H ( \mathbf{H}_{\mathrm{est}, i} ) \Big) \\ & + \sqrt{2\beta \cdot \alpha_i} \cdot \sigma_{z_i} \cdot \mathbf{\zeta},
\end{split}
\end{align}
\noindent where the term involving $\log p_Y ( \mathbf{Y} )$ does not depend on the current $\mathbf{H}_{\mathrm{est}, i}$ and has zero gradient.
\begin{figure*}[!t]
\centering
\includegraphics[width=0.8\linewidth]{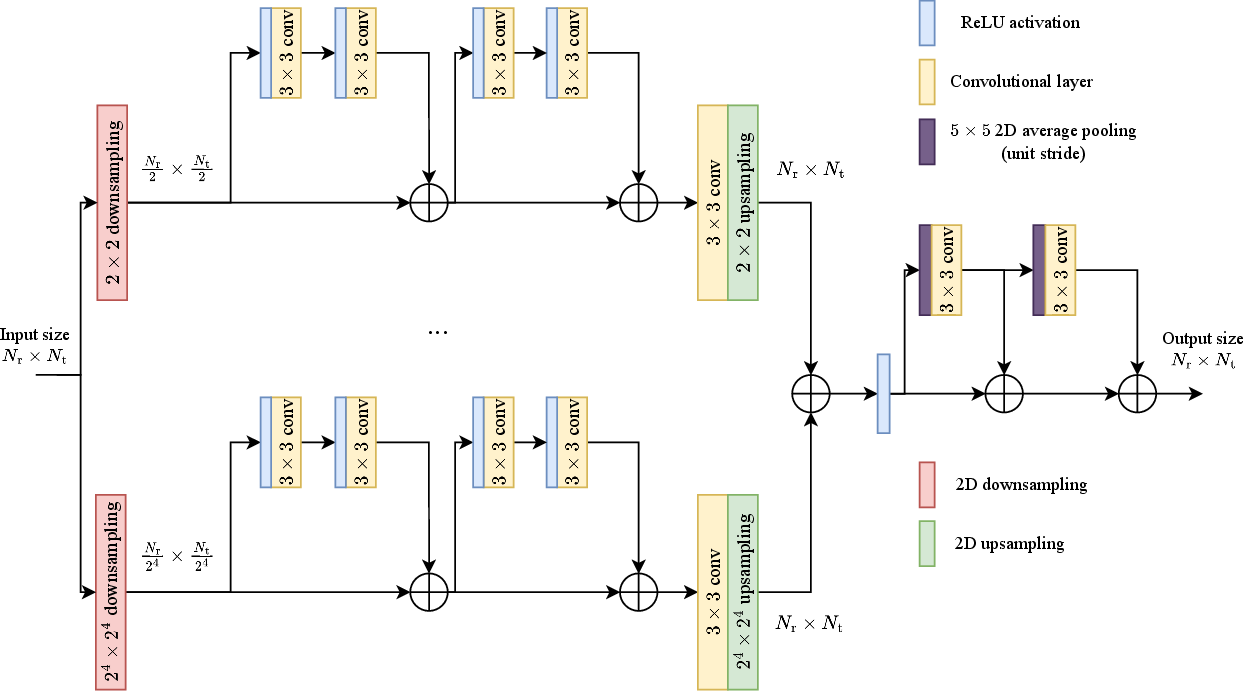}
\caption{Detailed block diagram of $s_\theta$ using the RefineNet architecture. The figure shows a core RefineNet block, that is serially repeated for $D$ times.}
\label{fig:score_architecture}
\end{figure*}
Using \eqref{eq:matrix_model} and the assumption that the pilot noise $\mathbf{N}$ is Gaussian, we obtain that the conditional distribution of $Y | H$ is Gaussian with mean $\mathbf{H}_{\mathrm{est},i} \mathbf{P}$ and covariance matrix $\sigma_\mathrm{pilot}^2 \mathbf{I}$, and its score can be derived in closed-form as:
\begin{equation}
    \nabla \log p_{Y | H} ( \mathbf{Y} | \mathbf{H}_{\mathrm{est}, i} ) = \frac{( \mathbf{H}_{\mathrm{est},i} \mathbf{P} - \mathbf{Y} ) \mathbf{P}^\mathrm{H}}{\sigma_\mathrm{pilot}^2}.
\label{eq:data_score}
\end{equation}
In practice, we also include an annealing term in the denominator of the above, as shown in Algorithm~\ref{alg:inference}. Finally, the only unknown component in \eqref{eq:abstract_update} is now the score of $p_H ( \mathbf{H} )$, evaluated at the current estimate. While this does not have a closed-form expression, we can leverage the score-based model $s_\theta$ trained using the procedure described in Section~\ref{sec:training}. This leads to the final channel estimation procedure in Algorithm~\ref{alg:sampling} and illustrated in Figure~\ref{fig:score_inference}, where, additionally, at each noise level we perform $M = 3$ updates \cite{song2020improved}. \revision{The update in \eqref{eq:abstract_update} can also be interpreted as a form of noisy, regularized gradient descent on the pilot consistency loss.}
\begin{algorithm}[!t]
\caption{MIMO Channel Estimation with Score-Based Generative Models.}\label{alg:sampling}
\begin{algorithmic}
\STATE \textbf{Inputs}: Pilot matrix $\mathbf{P}$, received pilots $\mathbf{Y}$, pretrained score-based model $s_\theta$, received noise power $\sigma_\mathrm{pilot}^2$, inference noise levels $\sigma_{z_i}^2$ (same as what $s_\theta$ was trained with), hyper-parameters $L, M, \alpha_0, \beta$ and $r < 1$.
\STATE \textbf{Generate random initial estimate}: $\mathbf{H}_\mathrm{est} \sim \mathcal{CN} (0, \mathbf{I} )$
\STATE \textbf{for} $i = 1, \ \dots, \ L$
\STATE \hspace{0.5cm} Set annealed noise level $\sigma \gets \sigma_{z_i}$.
\STATE \hspace{0.5cm} \textbf{for} $m = 1, \ \dots, \  M $
\STATE \hspace{1cm} Generate annealing noise $\mathbf{\zeta} \sim \mathcal{CN} (0, \mathbf{I} )$.
\STATE \hspace{1cm} $\mathbf{H}_\mathrm{est} \gets \mathbf{H}_\mathrm{est} + \alpha_0 \cdot r^i \cdot \frac{( \mathbf{H}_{\mathrm{est}} \mathbf{P} - \mathbf{Y}) \mathbf{P}^\mathrm{H}}{\sigma_\mathrm{pilot}^2 + \sigma^2} + \alpha_0 \cdot r^i \cdot s_\theta ( \mathbf{H}_\mathrm{est} ) + \sqrt{2 \beta \cdot \alpha_0 \cdot r^i} \cdot \sigma \cdot \mathbf{\zeta}$.
\STATE \textbf{Output}: Estimated channel matrix $\mathbf{H}_\mathrm{est}$.
\end{algorithmic}
\label{alg:inference}
\end{algorithm}

\section{Theoretical Results}
\label{sec:theory}
An important aspect of using posterior sampling is that previous work has derived theoretical guarantees for recovering the correct estimate (up to the ambient noise level), assuming that a sufficient number of measurements are available \cite{jalal2021instance,jalal2021robust}, even when posterior sampling is performed with respect to a mismatched distribution in terms of the 2-Wasserstein distance. In this section we derive an expression for the effects of train-test distributional mismatch for a simplified class of tapped delay line SISO channels.

Let $p$ and $r$ be two probability distributions, and $\mathcal{W}_2^2 (p, r) = \inf_{X \sim p, Y \sim r} \mathbb{E} \norm{X-Y}_2^2$ be the squared 2-Wasserstein distance between the two probability distributions \cite{panaretos2019statistical}. The infimum is taken over pairs of random variables with arbitrary joint distribution and marginals given by $p$ and $r$. If $\mathcal{W}_2^2 (p, r) > 0$, then the two distributions are mismatched.

We define the following family of sparse channels with complex-valued gains $\{ g_i \}_{i = 1, \dots, K}$ and real-valued delays $\{ \tau_i \}_{i = 1, \dots, K}$, and the channel gain at an arbitrary delay $t$ given by:
\begin{equation}
    h (t) = \sum_{i=1}^K g_i \cdot \delta(t - \tau_i).
\end{equation}
To obtain a vector channel, we sample $h(t)$ along the delay dimension, using a pre-determined, fixed waveform $w(t)$ at $N$ equally spaced points as $\mathbf{h}[i] = ( w(t) * g(t) ) ( iT )$, where $1/T$ is the sampling resolution. In practice, this corresponds to the sampled impulse response of a SISO channel. We make the following additional assumptions:
\begin{itemize}
    \item $\tau_i = -\alpha_i \log X_i$, where $\alpha_i$ is a constant and $X_i \sim \mathcal{U}(0, 1)$, $\forall i$. This is a realistic model for tap delays, and is used in the standardized CDL models \cite{3gpp.38.901}.
    \item $g_i \sim \mathcal{CN}(0, \sigma_i^2)$, $\forall i$, where $\sigma_i$ is a constant.
    \item $\sum_{i=1}^{K} \sigma_i^2 = 1$, i.e., the taps are normalized to average unit total power.
\end{itemize}
The above assumptions imply that a distribution of normalized channels $p_H$ is parameterized by the $2K$ real-valued degrees of freedom $\{ \sigma_i \}_{i = 1 \dots K}$ and $\{ \alpha_i \}_{i = 1 \dots K}$. For example, a line-of-sight (LOS) family of channels could exhibit $\alpha_1 < \alpha_i, \forall i > 1$, as well as $\sigma_1 \gg \sigma_i, \forall i > 1$, meaning that the channel profile contains a large magnitude path with a small delay, followed by delayed rays with much lower gains. Note that these assumptions only concern the marginal distributions of the taps and delays, while placing no restrictions on their joint distribution. Let $h_1$ and $h_2$ be two distributions following the previous assumptions, with degrees of freedom $\{\sigma_i^{(1)}, \alpha_i^{(1)}\}_{i = 1 \dots K}$ and $\{\sigma_i^{(2)}, \alpha_i^{(2)}\}_{i = 1 \dots K}$, respectively. The main theorem is given below.
\begin{theorem}
Let $\mathbf{h}^\star$ be a vector channel sampled from $h_2$. Then, posterior sampling with respect to $p_{h_1}$, using a number of $\mathcal{O} ( 1/\delta_\mathrm{MNR} )$ linear, Gaussian pilot measurements, at a noise level $\sigma_\mathrm{pilot}$, recovers $\mathbf{h}_\mathrm{est}$ such that:
\begin{equation}
    \norm{ \mathbf{h}^\star - \mathbf{h}_\mathrm{est}}_2 \le C \sigma_\mathrm{pilot} \quad \mathrm{w.p.} \quad 1 - \mathcal{O} (\delta_\mathrm{MNR}),
\end{equation}
\noindent where:
\begin{align}
\begin{split}
    \delta_\mathrm{MNR}^2  & = \frac{\mathcal{W}_2^2 ( h_1, h_2 )}{\sigma_\mathrm{pilot}^2} \\
    & \le \frac{\sum_i^K ( \sigma_i^{(1)} - \sigma_i^{(2)} )^2 + 2 ( \alpha_i^{(1)} - \alpha_i^{(2)} )^2}{\sigma_\mathrm{pilot}^2},
\label{eq:mnr}
\end{split}
\end{align}
\noindent and $C$ is a constant.
\label{theorem:two}
\end{theorem}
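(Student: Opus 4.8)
The plan is to split the argument into two essentially independent pieces: (i) deriving the closed-form expression for $\mathcal{W}_2^2(h_1,h_2)$ that appears in \eqref{eq:mnr}, and (ii) feeding that expression into the existing robustness guarantee for posterior sampling under a mismatched prior, so that the measurement count $\mathcal{O}(1/\delta_\mathrm{MNR})$, the $C\sigma_\mathrm{pilot}$ reconstruction error, and the $1-\mathcal{O}(\delta_\mathrm{MNR})$ success probability all come out by substitution. Concretely, I would invoke the instance-optimal / robust compressed-sensing-with-generative-priors results of \cite{jalal2021instance,jalal2021robust}: for i.i.d. Gaussian measurement matrices, posterior sampling with respect to a prior $p_{h_1}$ recovers a signal $\mathbf{h}^\star$ drawn from a different prior $p_{h_2}$ with $\ell_2$ error controlled by $\sigma_\mathrm{pilot}$ plus a term proportional to $\mathcal{W}_2(h_1,h_2)$, once the number of measurements exceeds the intrinsic complexity of the family, with a failure probability that degrades with $\mathcal{W}_2(h_1,h_2)/\sigma_\mathrm{pilot}=\delta_\mathrm{MNR}$. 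For the fixed $2K$-parameter family defined in this section the intrinsic-complexity factor is an absolute constant, so after absorbing constants the measurement requirement collapses to $\mathcal{O}(1/\delta_\mathrm{MNR})$ and the constant $C$ is the one supplied by that theorem.

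The core of the work is therefore (i). I would exploit the mutual-independence assumption on $\{h_i\}$ and $\{\tau_i\}$: the joint law of the latent parameters $(h_1,\dots,h_K,\tau_1,\dots,\tau_K)$ is a product measure, and for product measures the squared $2$-Wasserstein distance is additive over the factors, so it suffices to bound it coordinate by coordinate. For the gains, $h_i^{(1)}\sim\mathcal{CN}(0,\sigma_i^{(1)2})$ and $h_i^{(2)}\sim\mathcal{CN}(0,\sigma_i^{(2)2})$ are zero-mean complex Gaussians; the closed-form $\mathcal{W}_2$ distance between Gaussians, specialized to isotropic two-dimensional real covariances, gives $\mathcal{W}_2^2 = (\sigma_i^{(1)}-\sigma_i^{(2)})^2$. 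For the delays I would use the shared-randomness coupling $\tau_i^{(j)} = -\alpha_i^{(j)}\log x_i$ with a single $x_i\sim\mathcal{U}(0,1)$ feeding both distributions; this is the monotone quantile coupling and hence $\mathcal{W}_2$-optimal in one dimension, and it yields $\mathbb{E}\,|\tau_i^{(1)}-\tau_i^{(2)}|^2 = (\alpha_i^{(1)}-\alpha_i^{(2)})^2\,\mathbb{E}[(\log x_i)^2] = 2(\alpha_i^{(1)}-\alpha_i^{(2)})^2$, using $\int_0^1(\log x)^2\,dx = 2$. Summing the $K$ Gaussian terms and the $K$ delay terms reproduces exactly the numerator of \eqref{eq:mnr}. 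The remaining link is to transfer this from the latent parameters to the sampled channel vector $\mathbf{h}$: since $\mathbf{h}$ is the pushforward of the latents through the waveform-sampling map $\mathbf{h}[n]=(g*h)(nT)$, and this map is $1$-Lipschitz from the latent metric into $\ell_2$ (under the normalization $\sum_i\sigma_i^2=1$ and a unit-energy pulse $g$), the channel-space $\mathcal{W}_2$ is at most the latent-space value; for an ideal interpolation kernel it is an isometry onto the relevant shift-invariant subspace, which is what gives the equality stated in \eqref{eq:mnr}.

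I expect that transfer step to be the main obstacle, precisely because the waveform-sampling map depends nonlinearly on the delays through the shifts $g(\cdot-\tau_i)$, so the clean product-measure additivity lives in the latent space rather than in the channel space. The cleanest route is to establish that this map is $1$-Lipschitz — or, under an orthonormal-shift / Nyquist-type assumption on $g$, an isometry — so that $\mathcal{W}_2(p_{h_1},p_{h_2})$ on channel vectors is bounded by, and in the idealized case equal to, the closed form; absent such an assumption one obtains only the upper bound, which is nonetheless all that is needed to drive $\delta_\mathrm{MNR}$ in the final probability statement. A secondary point to verify is that this channel family — a finite union of low-dimensional pieces indexed by the random delays — meets the support and score-regularity hypotheses under which the guarantees of \cite{jalal2021instance,jalal2021robust} are proved; here I would lean explicitly on the Gaussian smoothing built into annealed Langevin dynamics, which regularizes the otherwise singular prior before their theorem is applied.
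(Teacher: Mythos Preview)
Your proposal is correct and follows essentially the same two-stage approach as the paper: invoke the robustness guarantee of \cite{jalal2021instance} to reduce everything to $\mathcal{W}_2(h_1,h_2)/\sigma_\mathrm{pilot}$, and then compute $\mathcal{W}_2^2(h_1,h_2)$ in closed form via tensorization over the independent latent coordinates, using the Gaussian formula for the gains and the quantile (monotone) coupling for the exponential delays to obtain $(\sigma_i^{(1)}-\sigma_i^{(2)})^2$ and $2(\alpha_i^{(1)}-\alpha_i^{(2)})^2$, respectively. You are in fact more careful than the paper on the latent-to-channel transfer step: the paper simply identifies the channel-vector distributions with the latent parameter distributions and applies the product-measure additivity (its Lemma~\ref{lemma:w2_separable}) there directly, without ever addressing the Lipschitz/isometry argument you flag as the main obstacle.
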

\begin{proof}
For the complete proof, see the Appendix. At a high-level, it involves two stages, in which the main novelty lies in the second stage: applying Theorem~1.1 from \cite{jalal2021instance} for the 2-Wasserstein distance, followed by deriving an upper bound for $\delta_\mathrm{MNR}^2$ given the previous assumptions.
\end{proof}

The quantity $\delta_\mathrm{MNR}^2$ represents the \textit{mismatch-to-noise} ratio, i.e., the ratio between the distributional mismatch of $h_1$ and $h_2$, and the pilot noise power. In general, $\delta_\mathrm{MNR}^2$ is a function of the two distributions. Theorem~\ref{theorem:two} states that the probability of successful channel estimation with posterior sampling is inversely proportional to $\delta_{\mathrm{MNR}}$, with a smaller $\delta_\mathrm{MNR}$ leading to higher probability of correct channel estimation. In general, this happens in two conditions: a very small mismatch $\mathcal{W}_2^2 ( h_1, h_2 )$ exists between the train and test distributions, or the noise power $\sigma_\mathrm{pilot}^2$ is very large. The above leads to the following performance analysis for channel estimation with posterior sampling:
\begin{enumerate}[(i)]
    \item When there is a train-test match, we have $\mathcal{W}_2^2 ( h_1, h_2 ) = 0$ and channel estimation with posterior sampling is optimal up to the noise level if using sufficient measurements.
    \item The probability of successful channel estimation decreases as the distributional distance between the training and test environments increases, at a fixed noise level $\sigma_\mathrm{pilot}^2$.
    \item Assuming a fixed $\mathcal{W}_2^2 ( h_1, h_2 )$, the probability of successful channel estimation \textit{increases} as the ambient noise level $\sigma_\mathrm{pilot}^2$ increases. While this implies favourable scaling in very low SNR, the score-based estimator in this regime is limited by the large noise power.
\end{enumerate}

\noindent \textbf{Applicability to MIMO Channels.} While the theory is derived for vector channels, the implications of Theorem~\ref{theorem:two} are verified in Section~\ref{sec:experiments} for MIMO channels, and are successful in predicting the performance of the proposed approach in varying test-time environments. In particular, we verify that channel estimation with posterior sampling, under moderate distributional shifts and with realistic, non-Gaussian transmitted pilot matrices verifies the three behaviours outlined in the previous paragraph.

As an alternative to delay-domain SISO channels, it is possible to interpret a vector channel as either a single-input multiple-output (SIMO) or multiple-input single-output (MISO) channel sampled along the spatial angular direction, using an array with a finite number of elements. As long as the distributions of the angles of arrival and departure allow for tractable computation of the 2-Wasserstein distance, a similar result could be derived for MIMO channels.

\section{Simulation Results and Discussion}
\label{sec:experiments}
We evaluate Algorithm~\ref{alg:sampling} in simulated settings, across a wide range of SNR values, using channel estimation fidelity, end-to-end error rates in a simulated coded system and \revision{training and inference complexity} to draw conclusions. To emulate deployment in a novel wireless propagation environment, Sections~\ref{sec:robust_est} and \ref{sec:end_to_end_exp} consider scenarios where models are tested on different distributions than the training one, without any prior knowledge about the test distribution or adaptation, \revision{and without any additional retraining}.

\subsection{Data and Training}
\begin{table*}
\begin{center}
\caption{Parameters used to simulate CDL channel realizations}
\label{table:data}
\begin{tabularx}{0.95\textwidth}{|c|*{5}{Y|}}
\hline
Channel Model & Carrier Frequency & Antenna Arrays & Antenna Spacing & ($N_\mathrm{r}, N_\mathrm{t}$) \\
\hline
Training: CDL-{B, C} & \multirow{2}{*}{$40$ GHz} & \multirow{2}{*}{ULA, UPA} & $\lambda/2$ - ULA & (16, 64), (32, 128) - ULA \\
Testing: CDL-A, B, C, D & & & ($\lambda/4, \lambda/4$) - UPA & (64, 256) - UPA \\
\hline
\end{tabularx}
\end{center}
\end{table*}
We use the CDL family of channel models to generate training, validation and test data. CDL-D channels are LOS, and generally the easiest to estimate due to their very sparse structure in the beamspace representation. CDL-B and -C channels are NLOS, while CDL-A channels have both components. For training, we use $10000$ channel realizations from a specific CDL model. To generate training samples, we increment the seed of the CDL generator, and pick the first subcarrier of the first symbol in each generated channel. Details about the used parameters are given in Table~\ref{table:data}. For tuning the hyper-parameter of CS methods we use a set of $100$ channel realization from the training distribution. Exact details on the training parameters (optimizer, learning rate, batch size) are available in the source code repository.

For testing, we generate a new set of $100$ channel realizations from each target distribution, using different random seeding than training and validation. For pilots $\mathbf{P}$, we use matrices of size $N_\mathrm{t} \times N_\mathrm{p}$ with randomly chosen QPSK elements (unit-power, two-bit phase-quantized random beamforming). We normalize all channels using the average channel power from the training set taken across all training samples and entries, and define the average SNR as $N_t / \sigma_\mathrm{pilot}^2$.

\subsection{Baselines}
The following baselines are evaluated and are available in the source code repository:
\begin{itemize}
    \item \revision{\textbf{Maximum Likelihood (ML)} \cite{nayebi2017semi}: This approach does not assume any prior information about the structure of $\mathbf{H}$ and aims to maximize the log-likelihood $p ( \mathbf{Y} | \mathbf{H} )$. Using knowledge that the noise $\mathbf{N}$ in \eqref{eq:basic_model} is Gaussian with zero mean and known variance $\sigma_\mathrm{pilot}^2$, and that the pilot entries are chosen i.i.d., yields the closed form solution via the regularized pseudo-inverse as \cite{nayebi2017semi}: $\mathbf{H}_\mathrm{ML} = \mathbf{Y} \mathbf{P}^\mathrm{H} ( \mathbf{P} \mathbf{P}^\mathrm{H} + \sigma_\mathrm{pilot}^2 \mathbf{I} )^{-1}$.}

    \item \textbf{Lasso} \cite{venugopal2017channel}: This is a CS-based approach that uses $\ell_1$-norm element-wise regularization in the two-dimensional Fourier (beamspace) domain. Channel estimation is formulated as the solution to the following optimization problem:
    \begin{equation}
        \argmin_\mathbf{H} \frac{1}{2} \norm{\mathbf{Y} - \mathbf{H} \mathbf{P}}_F^2 + \lambda \norm{\mathbf{F}_\mathrm{left} \mathbf{H} \mathbf{F}_\mathrm{right}^\mathrm{H}}_{1,1},
    \label{eq:cs_basic}
    \end{equation}
    \noindent where $\mathbf{F}_\mathrm{left}$ and $\mathbf{F}_\mathrm{right}$ are square, discrete Fourier matrices of size $N_\mathrm{r} \times N_\mathrm{r}$ and $N_\mathrm{t} \times N_\mathrm{t}$, respectively. We use gradient descent with momentum to solve \eqref{eq:cs_basic} and we tune the step size, number of optimization steps, and the value of $\lambda$ using the validation set.
    
    \item \revision{\textbf{EM-GM-AMP} \cite{schniter2014channel}: This approach uses a Gaussian mixture (GM) prior in the beamspace domain of $\mathbf{H}$, justified by the sparse angular nature of mmWave propagation channels \cite{heath2016overview}. We use the publicly available Matlab implementation provided in \cite{emgmamp_code} and tune the number of internal expectation maximization (EM) steps, as well as the stopping condition using the validation set.}

    \item \textbf{fsAD} \cite{bhaskar2013atomic,zhang2017atomic}: This represents a \revision{classical} CS-based approach for recovering channel matrices assumed sparse in the continuous spatial frequency domain, similar to Newtonized OMP (Orthogonal Matching Pursuit) \cite{mamandipoor2016newtonized}. In scenarios with uniform linear arrays (ULA) at both the receiver and transmitter, we use a formulation that \revision{assumes sparsity in the over-sampled beamspace domain} \cite{zhang2017atomic}:
    \begin{equation}
    \label{eq:fast_atomic}
        \argmin_\mathbf{H} \frac{1}{2} \norm{\mathbf{Y} - \mathbf{H} \mathbf{P}}_F^2 + \lambda \norm{\mathbf{W}_\mathrm{left} \mathbf{H} \mathbf{W}_\mathrm{right}^\mathrm{H}}_{1,1},
    \end{equation}
    \noindent where $\mathbf{W}_\mathrm{left}$ and $\mathbf{W}_\mathrm{right}$ represent $l$ times over-sampled DFT matrices. We use $l=4$, and tune the parameter $\lambda$, step size and number of steps using the validation set.

    \item \textbf{WGAN} \cite{balevi2020high}: This represents an unsupervised framework for using generative adversarial networks in channel estimation. A generative model $g_\theta$ is first trained to map low-dimensional vectors $\mathbf{z}$ to channel matrices $\mathbf{H}$, using an adversarial loss and regularization \cite{gulrajani2017improved}. During inference, a similar optimization problem to ours is formulated and solved iteratively, by inverting the generative model. We tune the hyper-parameters using the validation set.

    \item \textbf{L-DAMP} \cite{he2018deep,metzler2017learned}: This represents a powerful data-driven algorithm that uses deep unrolling and end-to-end learning. We use a denoising convolutional neural network (DnCNN) \cite{zhang2017beyond} backbone, and train a separate L-DAMP model for each value of $\alpha$.
    
    \item \revision{\textbf{Approximate MMSE}: By definition, the minimum mean squared error (MMSE) estimator is given by $\mathbb{E}[ \mathbf{H} | \mathbf{Y} ]$. Because the log-prior for realistic channel models, such as CDL channels, is generally intractable and does not have a closed form expression (e.g., the tap locations themselves are stochastic, which would require additional assumptions and factorization of the prior to become tractable \cite{neumann2018learning}), we leverage score-based models to obtain an empirical upper performance bound by averaging a number of samples from the estimated posterior (given by running Algorithm~\ref{alg:sampling}) $p ( \mathbf{H} | \mathbf{Y} )$ \cite{jalal2021robust}. For the remaining simulations, we use $50$ posterior samples, each obtained from a different run of Langevin dynamics with different initial estimates.}
\end{itemize}
\begin{figure*}[!t]
\centering
\includegraphics[width=0.9\linewidth]{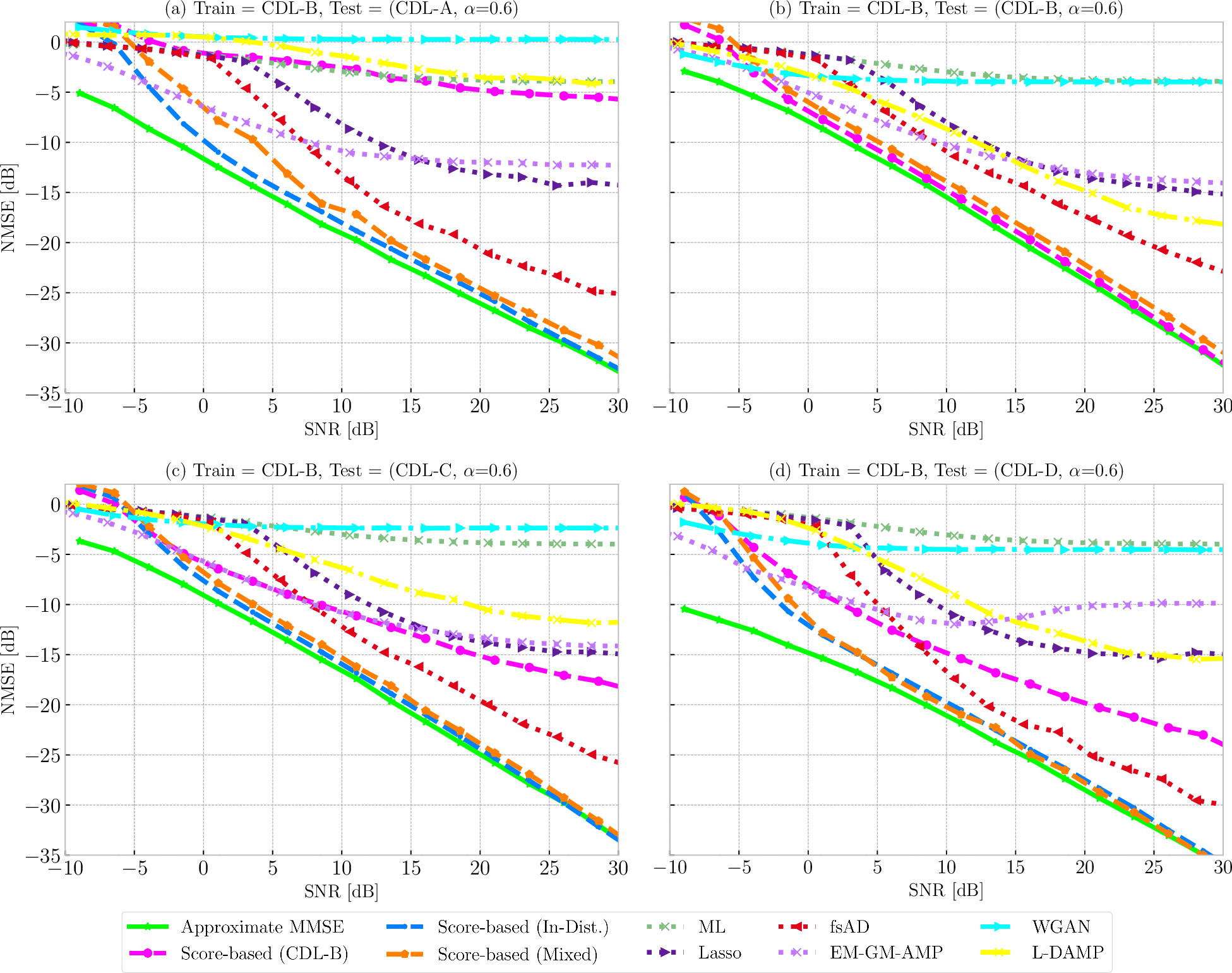}
\caption{Channel estimation performance using methods trained on CDL-B channels in a $16 \times 64$ mmWave MIMO scenario. Each of the four sub-figures shows performance in a different CDL environment, when $\alpha = 0.6$ ($38$ pilot vectors).}
\label{fig:cdl_b_results}
\end{figure*}
\subsection{Robust Estimation Performance}
\label{sec:robust_est}
We train score models on $N=10000$ samples of CDL-B and CDL-C channels separately, and test them in CDL-\{A, B, C, D\} environments, without fine-tuning or adaptation. As a reference, we also train a \textit{Mixed} score-based model, which uses $10000$ training samples each from all four CDL-\{A, B, C, D\} models. \revision{We also train score-based models on their matching test-time distribution to evaluate the potential gains of adapting models. In practice, this adaptation would be done at the base station or user equipment, after a training set from an environment is collected.} We measure estimation quality using the normalized mean squared error (NMSE), defined as:
\begin{equation}
    \mathrm{NMSE \ [dB]} = 10 \log_{10} \frac{\norm{\mathbf{H}_\mathrm{est} - \mathbf{H}}_F^2}{\norm{\mathbf{H}}_F^2}.
\end{equation}
Figure~\ref{fig:cdl_b_results} shows estimation results when using models trained on CDL-B channels. In the in-distribution setting score-based models recover channels up to the noise level, and surpass prior work by at least $3$ dB in NMSE for SNR values between $-5$ and $30$ dB. In all test conditions, the \textit{Mixed}, \revision{in-distribution and approximate MMSE methods} scale favourably with high SNR. \revision{ML, EM-GM-AMP and Lasso are competitive in the very low SNR regime, but saturate performance quickly. This makes score-based models an attractive candidate for scenarios where very accurate channel estimation is desired, such as fixed, broadband wireless access.} The slight drop in performance between the \textit{Mixed} model and the model trained only on CDL-B channels in Figure~\ref{fig:cdl_b_results}(b) is owed to the finite capacity of the deep network, performing slightly poorer on a specific distribution when data from other distributions is included in training.

The WGAN approach can competitively estimate channels in the very low SNR regime, but, in general, suffers from saturation of the performance at SNR values greater than $-5$ dB. \revision{This has been previously observed in \cite{bora2017compressed} and is owed to the sub-optimal approach (in practice, using an Adam \cite{kingma2014adam} optimizer) used to invert the latent representation, whereas a score-based model using Langevin dynamics converges to a near-optimal solution.} Estimating CDL-A channels at a pilot density $\alpha=0.6$ using a score-based model trained on CDL-B is difficult, as shown in Figure~\ref{fig:cdl_b_results}(a). \revision{Because of CDL-A channels including both LOS and non-LOS components, this leads to a large distributional distance between CDL-A and all other models, where a subset of propagation paths will always be missing. In accordance with Theorem~\ref{theorem:two}, this causes an error floor in the high SNR regime.}
\begin{figure*}[!t]
\centering
\includegraphics[width=0.9\linewidth]{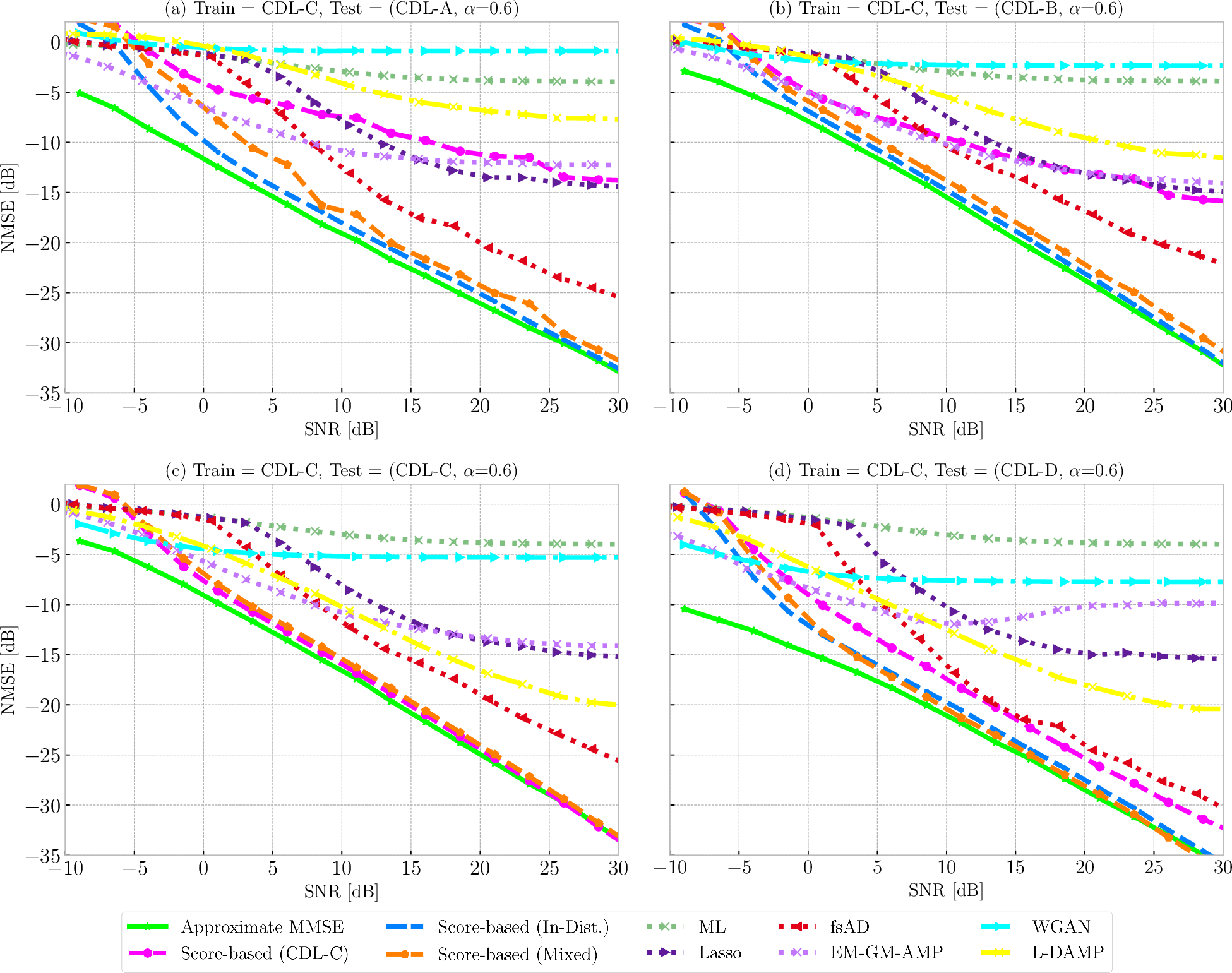}
\caption{Channel estimation performance using methods trained/tuned on CDL-C channels in a $16 \times 64$ mmWave MIMO scenario. Each of the four sub-figures shows performance in a different CDL environment, when $\alpha = 0.6$ ($38$ pilot vectors).}
\label{fig:cdl_c_results}
\end{figure*}
Figure~\ref{fig:cdl_c_results} plots estimation performance with models trained only on CDL-C channels, and tested on all four CDL models. This generally corroborates the findings from Figure~\ref{fig:cdl_b_results}: score-based models recover channels up to the noise floor in-distribution (Figure~\ref{fig:cdl_c_results}(c)), and outperform all baselines for SNR values between $-5$ and $30$ dB. A notable difference here is the much better generalization capability of score-based models from CDL-C to CDL-D environments, as observed in Figure~\ref{fig:cdl_c_results}(d). This is explained by the fact that CDL-C channels contain the least amount of scattering, and are most similar to CDL-D channels, leading to a lower $\delta_\mathrm{MNR}(\mathrm{CDL-C}, \mathrm{CDL-D})$, in line with the Theorem~\ref{theorem:two}. Another difference that occurs when training on CDL-C channels, is that generalization to CDL-A models is also improved, matching the performance of the Lasso algorithm, as shown in Figure~\ref{fig:cdl_c_results}(a).

In both settings of Figure~\ref{fig:cdl_b_results} and Figure~\ref{fig:cdl_c_results}, we also find that L-DAMP has similar scaling in terms of SNR, and is a reliable estimation approach, especially for the simpler CDL-C and CDL-D models, as highlighted in Figures~\ref{fig:cdl_b_results}(c) and \ref{fig:cdl_b_results}(d). The Lasso \revision{and EM-GM-AMP} approaches are also competitive, in general, and surpass score-based models (trained on CDL-B), WGAN, and L-DAMP when evaluated on CDL-A channels, as shown in Figure~\ref{fig:cdl_b_results}(a).

The findings in this section support the conclusions of Theorem~\ref{theorem:two}:
\begin{enumerate}[(i)]
    \item Posterior sampling with score-based models can recover channels up to the noise floor when the test distribution matches the training and the SNR is above $0$ dB.
    \item Channel estimation with posterior sampling outperforms the baselines in the low SNR regime ($-5$ to $0$ dB), under test-time distributional shifts.
    \item Test channel distributions that are significantly more different than the training distributions lead to error floors in the high SNR regime (Figures~\ref{fig:cdl_b_results}(a) and \ref{fig:cdl_c_results}(a)).
\end{enumerate}
Finally, it is worth highlighting that \revision{CS approaches} are overall robust and retain performance under distributional shifts, regardless of what environment the hyper-parameters are tuned on. While these algorithms are useful in idealized settings, the following limitations remain:
\begin{itemize}
    \item When deriving the approximation in \eqref{eq:fast_atomic}, we use knowledge about the shapes of the antenna arrays at the receiver and transmitter. In general settings, this information may not be available, or the array response may be intractable (e.g., for non-uniform arrays). In contrast, score-based generative models learn the distribution of the channels while \textit{implicitly} learning the array configuration, with no external knowledge required.
    \item A key assumption made by \revision{CS approaches} is that channels are exactly sparse (on a discrete grid \revision{for Lasso and EM-GM-AMP}, or on the continuum of spatial frequencies \revision{for fsAD}). While this holds for simulated channel using the CDL models, it is generally not true in practice, e.g., as shown in channel sounding experiments at mmWave \cite{rappaport2012broadband,molisch2016millimeter}. Score-based models are completely data-driven, with no assumptions required.
\end{itemize}

\subsection{End-to-End Coded Performance}
\label{sec:end_to_end_exp}
To consider a realistic performance metric, in this section we evaluate end-to-end coded bit error rates in a simulated downlink physical layer flow of cellular communication systems. A block diagram of the system is shown in Figure~\ref{fig:e2e_diagram}. The transmitter (base station) sends a pilot matrix $\mathbf{P}$ across multiple channel uses, which is received by the user as $\mathbf{Y}$, using the model in \eqref{eq:matrix_model}. Channel estimation is used to obtain $\mathbf{H}_\mathrm{est}$.
\begin{figure*}[!t]
\centering
\includegraphics[width=0.9\linewidth]{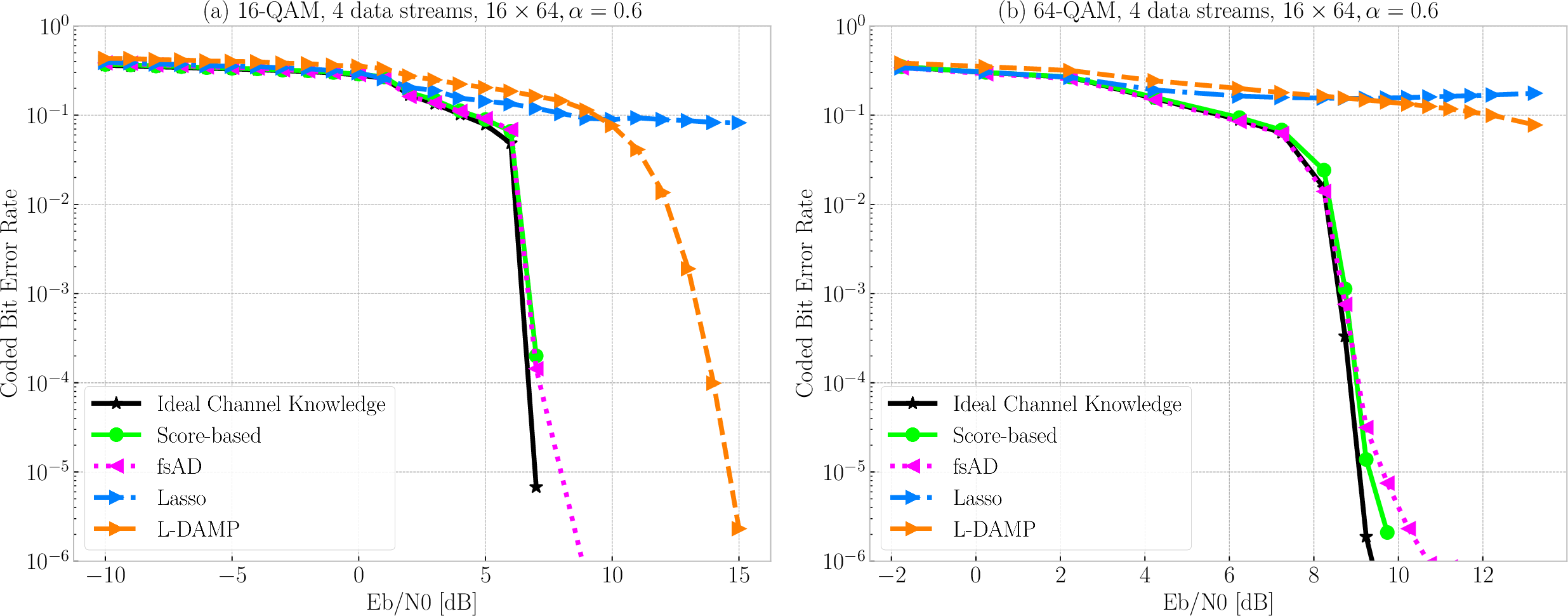}
\caption{End-to-end coded bit error rate as a function of SNR in CDL-D out-of-distribution channels, with $\alpha = 0.6$. Methods are trained (Score-based, L-DAMP) or tuned (fsAD, Lasso) exclusively on CDL-C channels, and tested on CDL-D using: (a) 16-QAM and (b) 64-QAM data modulation.}
\label{fig:e2e_results}
\end{figure*}
\begin{figure}[!t]
\centering
\includegraphics[width=1\linewidth]{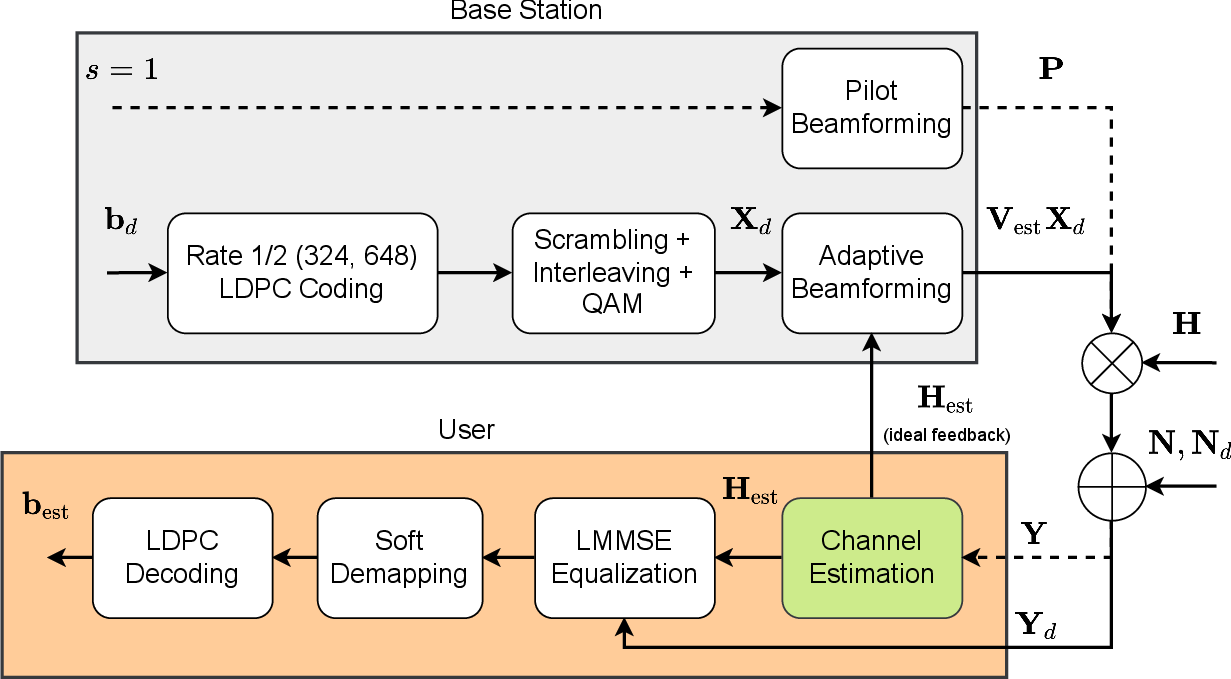}
\caption{Block diagram of the simulated setup used to evaluate end-to-end performance.}
\label{fig:e2e_diagram}
\end{figure}
For the data transmission, a payload of $324$ bits is encoded using a rate $1/2$ low-density parity-check (LDPC) code, followed by digital modulation. Symbols are split in groups of $N_\mathrm{s} = 4$ data streams, followed by adaptive digital beamforming with the unit-power $\mathbf{V}_\mathrm{est}$ matrix to obtain a matrix of $64$-dimensional column vectors. We assume that the channel is symmetric (e.g., time-division duplex), and that the transmitter obtains noiseless feedback about the estimated channel. The adaptive transmitter beamforming block uses the first four columns of $\mathbf{V}$ from the singular value decomposition of $\mathbf{H}_\mathrm{est}$ as beamforming weights. The data symbols are observed at the receiver as $\mathbf{Y}_d = \mathbf{H} \mathbf{V}_\mathrm{est}$ and are succeeded by linear MMSE detection to yield the equalized data symbols:
\begin{equation}
    \mathbf{X}_{d,\mathrm{eq}} = ( \mathbf{H}_\mathrm{est}^\mathrm{H} \mathbf{H}_\mathrm{est}^{} + \sigma_\mathrm{pilot}^2 \mathbf{I} )^{-1} \mathbf{H}_\mathrm{est}^\mathrm{H} \mathbf{Y}_d.
\end{equation}
The above is followed by entry-wise soft de-mapping, reshaping to a soft codeword, and LDPC decoding, to obtain the decoded bit stream $\mathbf{b}_\mathrm{est}$. This is repeated for two million codewords, each using a different CDL-D channel realization from a randomly generated test set, while using models trained on the CDL-C distribution for channel estimation, matching the setting of Figure~\ref{fig:cdl_c_results}(d).
Figure~\ref{fig:e2e_results} plots the performance results for four methods, as well as communication using ideally known channels, measured in terms of the energy per bit to noise ratio ($E_b/N_0$) of the corresponding channel estimation SNR. We omit WGAN, \revision{ML and EM-GM-AMP} from this investigation due to their estimation error floor, which cannot support packet communication at the considered coding rate. \revision{We also omit the approximate MMSE due to its large computation complexity, and instead use ideal channel knowledge as an upper performance bound.}

Figure~\ref{fig:e2e_results}(a) uses 16-QAM modulation. Even in this case, channel estimation errors lead to an error floor for the Lasso algorithm, while score-based models, fsAD, and L-DAMP manage to overcome this and decay the bit error rate at high $E_b/N_0$. In this case, an NMSE between $-20$ dB and $-15$ dB is required to avoid this error floor, that Lasso \revision{and other simple CS methods} cannot achieve, as per Figure~\ref{fig:cdl_c_results}(d). Figure~\ref{fig:e2e_results}(a) also shows that both score-based models and fsAD are competitive when compared to L-DAMP: the same coded bit error rate can be achieved with an $E_b/N_0$ that is $5$ dB lower than L-DAMP.
\begin{figure*}[!t]
\centering
\includegraphics[width=0.9\linewidth]{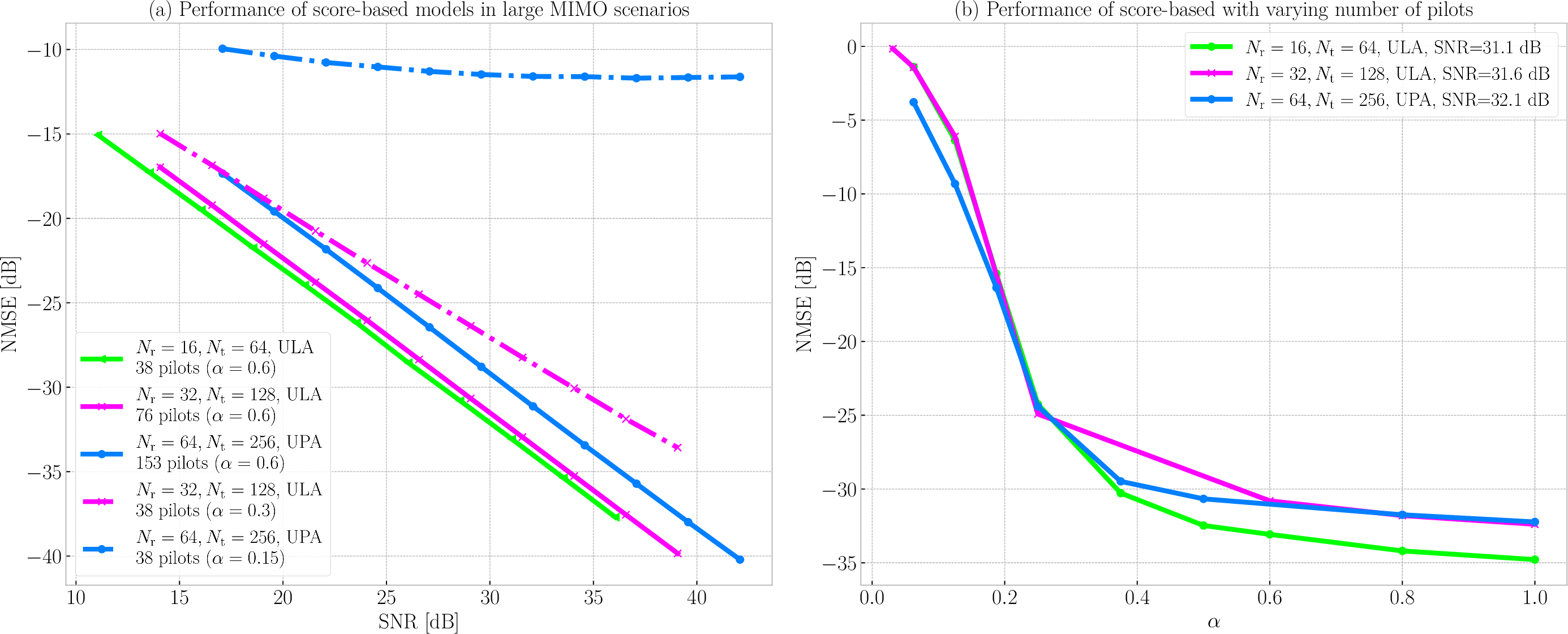}
\caption{Estimation performance of score-based models trained and tested on CDL-C channels at: (a) different MIMO sizes and (b) values of $\alpha$.}
\label{fig:size_scaling}
\end{figure*}
In Figure~\ref{fig:e2e_results}(b), the modulation is 64-QAM and accurate channel estimation is required for equalization and precoding. The L-DAMP approach is not enough to avoid an end-to-end error floor, while score-based models and fsAD overcome this floor and decay the bit error rate at approximately the same rate as ideal channel knowledge. Finally, the benefits of extremely accurate channel estimation at high SNR values (e.g., estimation NMSE lower than $-30$ dB at pilot SNR larger than $25$ dB) are also illustrated in Figure~\ref{fig:e2e_results}(b), where it can be seen that both methods depart from optimal performance, and score-based models improve performance by up to $0.5$ dB in $E_b/N_0$. \revision{Overall, these results highlight the importance of accurate channel estimation and how score-based models are competitive for this purpose and trade off increased computational complexity for accuracy.}

\subsection{Scaling to Large Channel Sizes}
We verify that channels can be estimated without error floors using a small score-based model, even for large channel sizes. Figure~\ref{fig:size_scaling}(a) shows the results of the experiment when training and testing score-based models for three channel sizes, where for each size we train a separate model. It can be seen that, given a fixed pilot overhead $\alpha$, there is a slight drop in performance at larger channel sizes -- we attribute this to using a score-based model of the same size (depth $D=6$, width $W=12$), regardless of channel size. To compare performance under a resource-limited scenario, we simulate larger sizes, where only $38$ pilot vectors are allowed, leading to low values of $\alpha$. In this case, estimation for larger sizes is competitive given sufficient pilots, but fails for $64\times 256$ with $\alpha = 0.15$.

Figure~\ref{fig:size_scaling}(b) investigates in more detail how performance scales with $\alpha$ for all three sizes in the high SNR regime, and the same model size as in Figure~\ref{fig:size_scaling}(a): here, we consistently find that there is a breaking point for $\alpha \approx 0.25$, below which the estimation error increases rapidly. This is indicative of not having sufficient measurements to meet the conditions of Theorem~\ref{theorem:two}.

\subsection{Complexity Analysis and Ablation}
\label{sec:ablation}
\begin{table*}
\begin{center}
\caption{Network size ablation results. $D$ represents the network depth \revision{(number of RefineNet blocks in Figure~\ref{fig:score_architecture})}, while $W$ is the number of hidden channels in the first hidden layer. Score-based models are trained on CDL-C channels and evaluated with $\alpha=0.6$ and low ($8$ dB) or high ($28$ dB) SNR.}
\label{table:ablation}
\begin{tabularx}{0.95\textwidth}{|c|*{10}{Y|}}
\hline
& $D=4$ & $D=5$ & $D=6$ & $D=4$ & $D=5$ & $D=6$ & $D=4$ & $D=5$ & $D=6$ \\
& $W=6$ & $W=6$ & $W=6$ & $W=12$ & $W=12$ & $W=12$ & $W=24$ & $W=24$ & $W=24$ \\
\hline 
\makecell{NMSE CDL-C [dB]} & $-37.8$ & $-38.6$ & $-38.7$ & $-37.3$ & $-38.5$ & $-38.8$ & $-37.8$ & $-38.5$ & $-38.7$ \\ 
\hline
\makecell{NMSE CDL-D [dB]} & $-36.3$ & $-34.8$ & $-34.8$ & $-36.3$ & $-35.3$ & $-35.2$ & $-36.0$ & $-35.2$ & $-35.5$ \\ 
\hline
Num. weights & $64$k & $184$k & $208$k & $263$k & $734$k & $828$k & $1046$k & $2938$k & $3314$k \\
\hline
Latency/step [ms] & $4.0$ & $5.2$ & $6.0$ & $4.1$ & $5.3$ & $6.0$ & $4.2$ & $5.4$ & $6.1$ \\
\hline
\makecell{Num. steps (low)} & $463$ & $580$ & $293$ & $287$ & $386$ & $326$ & $406$ & $388$ & $246$ \\
\hline
\makecell{Num. steps (high) } & $2034$ & $2121$ & $2119$ & $2016$ & $1935$ & $2071$ & $1908$ & $1931$ & $2045$ \\
\hline
\end{tabularx}
\end{center}
\end{table*}
All score-based models up to this point have used a RefineNet with $W = 24$ hidden channels in the first layer, and a depth $D$ of six residual blocks in both the encoder and decoder paths. In Table~\ref{table:ablation}, we investigate validation performance at SNR $28$ dB for nine model sizes that vary in depth and width. \revision{We chose this value as it corresponds to the high SNR regime, where score-based models have NMSE lower than $-30$ dB and performance is bounded by the quality of the learned prior, as indicated by Theorem~\ref{theorem:two}.} Our findings indicate that model performance is much more sensitive to depth rather than width, but even a shallow model (four blocks), has a performance loss of at most $1.5$ dB. Shallower models improve performance in the out-of-distribution setting, indicating that they overfit less. Table~\ref{table:ablation} also measures per-step latency on a machine with an NVIDIA A100 GPU and Intel Xeon 6230 CPU, as well as the optimal number of inference steps at two SNR values and $\alpha=0.6$. We find that depth is a main contributor to latency, whereas width is less impactful, due to the powerful parallelization capabilities of the GPU.

\revision{Figure~\ref{fig:interf_and_conv}(a) plots the convergence of the training loss in \eqref{eq:batch_loss} for score-based models trained on all channel distributions (note that the optimal value of this loss is not zero \cite{song2020improved}). It can be noticed that in all cases, convergence is stable and achieved in at most $60$ epochs using $N=10000$ training samples, with approximately one minute per epoch of training on the GPU. This is also a strong indicator that adapting a pre-trained score-based model would converge in at most the same time, and achieve near optimal performance in as few as ten epochs, given a large training set is available.}

\revision{Figure~\ref{fig:interf_and_conv}(b) plots the convergence of the Langevin dynamics updates in \eqref{eq:abstract_update} as a function of the number of steps, SNR, and the hyper-parameter $\beta$, averaged over the validation set. It can be seen that convergence is stable for all SNR levels, and that robust early stopping points across all validation samples can be found -- these benefit both complexity (reduced number of inference steps and up to $6\times$ reduced inference complexity) and performance. In general, lowering $\beta$ converges faster and to a better solution, but is also more sensitive to the stopping criterion. In all cases, the estimation loss during inference exhibits two stages: a fast convergence stage (in the first $200$ steps, where an estimate with an error less than $-10$ dB can be obtained for all SNR values) in which the residual errors are non-Gaussian and the updates lead to higher density regions of the distribution, and a slower convergence stage, where finer structural details of the estimated channel are captured.}
\begin{figure*}[!t]
\centering
\includegraphics[width=0.9\linewidth]{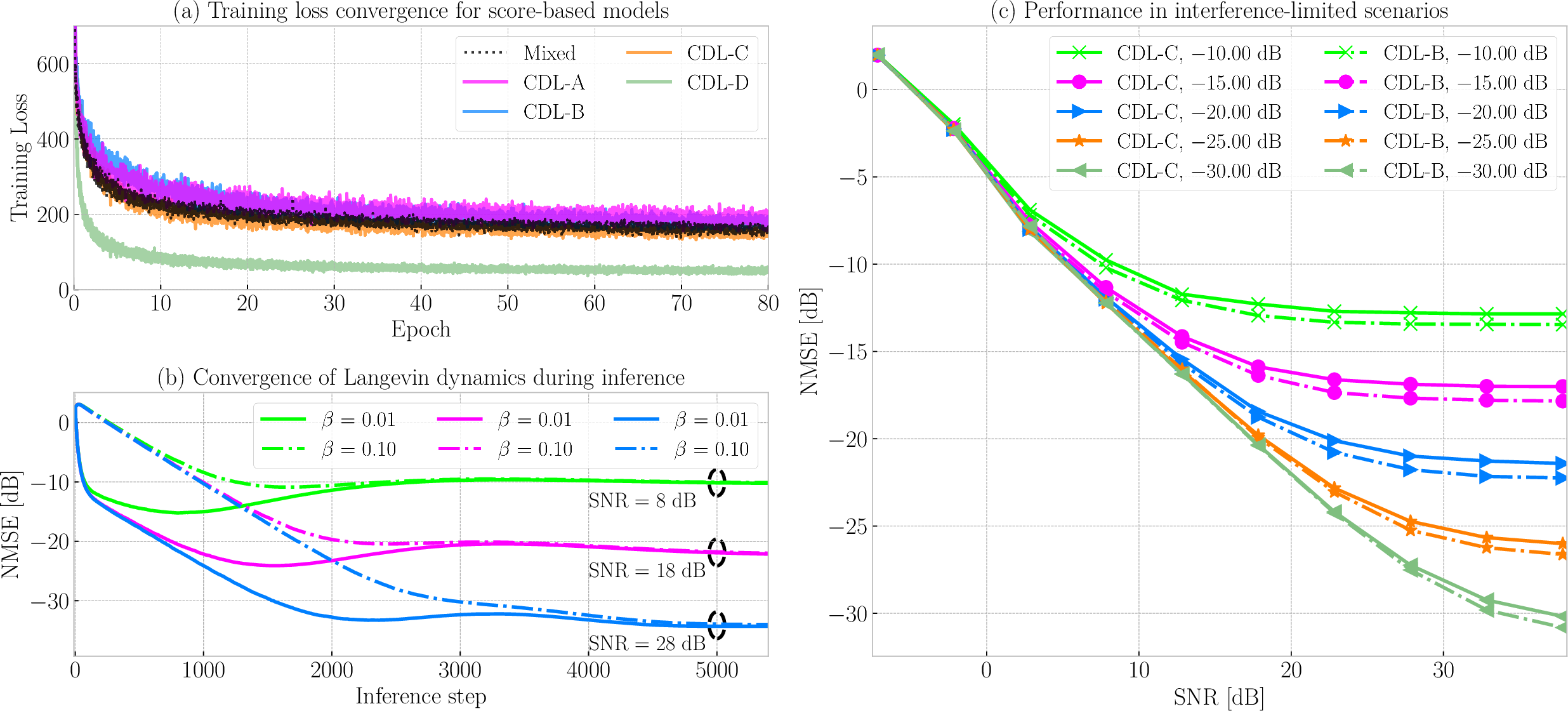}
\caption{\revision{(a) Convergence of the training loss for models with $D=6, W=24$. (b) Convergence of annealed Langevin dynamics for three SNR levels and two values of the $\beta$ hyper-parameter, averaged across $100$ validation samples. (c) In-distribution performance of score-based models in interference channels, with varying interference power.}}
\label{fig:interf_and_conv}
\end{figure*}
\begin{table*}
\begin{center}
\caption{Complexity, latency and memory footprint (active usage and model size) for different MIMO scenarios. Values are reported for estimation at SNR = $8$ dB, CDL-C channels, and $\alpha = 0.6$.}
\label{table:complexity}
\begin{tabularx}{0.95\textwidth}{|c|*{12}{Y|}}
\hline
& \multicolumn{3}{c|}{FLOP Count [GFLOPs]} & \multicolumn{3}{c|}{Latency [ms]} & \multicolumn{3}{c|}{Memory [MB]} & \multicolumn{3}{c|}{Model Size [kB]} \\ 
\hline
\backslashbox{Method}{$N_\mathrm{r}$} & $64$ & $128$ & $256$ & $64$ & $128$ & $256$ & $64$ & $128$ & $256$ & \multicolumn{3}{c|}{All} \\ 
\hline
Score-based & $18.9$ & $75.9$ & $303.5$ & $1500$ & $1770$ & $2070$ & $1.12$ & $4.46$ & $17.84$ & \multicolumn{3}{c|}{$828$} \\ 
\hline
ML & $0.001$ & $0.007$ & $0.061$ & $0.37$ & $0.62$ & $1.25$ & $0.06$ & $0.26$ & $1.04$ & \multicolumn{3}{c|}{n/a} \\ 
\hline
EM-GM-AMP & \multicolumn{3}{c|}{$\mathcal{O}(N_\mathrm{t} N_\mathrm{r})$} & $504$ & $4310$ & $>10$s & \multicolumn{3}{c|}{$\mathcal{O}(N_\mathrm{r}^2)$} & \multicolumn{3}{c|}{n/a} \\ 
\hline
fsAD & $6.86$ & $54.9$ & $439.4$ & $2590$ & $2600$ & $2700$ & $0.07$ & $0.29$ & $1.18$ & \multicolumn{3}{c|}{n/a} \\ 
\hline
Approx. MMSE & $56.9$ & $274.1$ & $1095$ & $2090$ & $3220$ & $9170$ & $19.4$ & $40$ & $122$ & \multicolumn{3}{c|}{$828$} \\ 
\hline
\end{tabularx}
\end{center}
\end{table*}
Table~\ref{table:complexity} evaluates the computational complexity of score-based generative models, \revision{approximate MMSE, and the compressed sensing baselines. It can be seen that inference latency is comparable to fsAD, and much lower than that of approximate MMSE and EM-GM-AMP in large MIMO scenarios. For EM-GM-AMP, we do not include exact FLOP counts and memory footprints due to lack of support in the MATLAB profiler. The FLOP count of score-based models scales much more favourably compared to fsAD due to being a purely convolutional algorithm, and has a performance loss smaller than $0.1$ dB in-distribution compared to the approximate MMSE algorithm, while requiring four times fewer FLOPs.}

\subsection{\revision{Performance in Interference Channels}}
\revision{We test estimation performance using pretrained score-based models on interference limited scenarios, which can become the primary bottleneck in mmWave cellular systems with dense deployments \cite{andrews2016modeling}. The system equation in \eqref{eq:basic_model} takes the form \cite{akoum2012coverage}:}
\begin{equation}
    \revision{\mathbf{Y} = \mathbf{H}\mathbf{P} + \gamma \mathbf{H}_\mathrm{I}\mathbf{P}_\mathrm{I} + \mathbf{N},}
    \label{eq:interference_model}
\end{equation}
\noindent \revision{where $\mathbf{H}_\mathrm{I}$ and $\mathbf{P}_\mathrm{I}$ are the interference channels and transmitted pilots, respectively, and $\gamma$ represents the power of the interference signal.}

\revision{We investigate a $64 \times 16$ MIMO scenario where $\mathbf{H}$ is sampled from CDL-C channels, and a score-based model trained on CDL-C models is used. $\mathbf{H}_\mathrm{I}$ and $\mathbf{P}_\mathrm{I}$ are sampled independently of $\mathbf{H}$ and $\mathbf{P}$, respectively, with $\mathbf{H}_\mathrm{I}$ sampled from either CDL-C or CDL-B channels. The value of $\gamma$ is varied from $-10$ dB to $-30$ dB. When running annealed Langevin dynamics, no knowledge about the distribution of interference or $\gamma$ is assumed, and the interference is treated as noise.} \revision{The results in Figure~\ref{fig:interf_and_conv}(b) show that estimation performance is reliable in noise-limited scenarios (without departing from the performance in Figure~\ref{fig:cdl_c_results}(c)) and that score-based models can effectively accommodate non-Gaussian noise scenarios. At higher SNR values, the performance becomes interference limited, with the performance floor improving relative to $\gamma$ by at least $2$ dB, experimentally showing that part of the interference signal can be reliably eliminated using score-based models.}

\section{Conclusion}
\label{sec:conclusion}
In this paper, we have introduced an unsupervised, probabilistic approach for MIMO channel estimation using a reduced number of pilot symbols. The approach leverages score-based generative models to perform posterior sampling, and represents a new research direction for MIMO channel estimation. Our results on simulated mmWave channels show that performance is favourable in-distribution, as well as in out-of-distribution settings, where channel distributions not seen during training are tested. Additionally, compared to prior work, estimation with posterior sampling has the advantage of avoiding error floors \revision{and achieving high-quality estimation in the high SNR regime, which leads to improved end-to-end performance, while trading off complexity.}

A current limitation of score-based models is the high inference complexity of \revision{posterior sampling with Langevin dynamics}. In Section~\ref{sec:ablation} we have performed an initial investigation into improvements achievable through architectural modifications (depth and width of the deep score-based model) which was able to reduce inference latency down to $1.5$ seconds for channels of size $16 \times 64$. This is usable in scenarios with low mobility, where long-term distributional shifts may still occur, \revision{such as fixed access mmWave, reflective intelligent surfaces, or mmWave backhaul. Based on the fast convergence of training in ten epochs when learning from randomly initialized models, future work that considers adapting pre-trained models to new environments using a limited amount of training channels is a promising direction}. Recent work \cite{salimans2022progressive} has investigated algorithmic improvements to posterior sampling by choosing the hyper-parameters during inference to trade-off performance for latency. This a promising direction of future research that we aim to explore, \revision{alongside others such as leveraging pretrained score-based models for channel estimation in few-bit quantization receivers and overcoming error floors in interference scenarios.}
\section*{Appendix}
\begin{theorem}[Theorem 1.1 from \cite{jalal2021instance}]
\label{theorem:ajil}
Let $h_1$ be a high-dimensional arbitrary distribution over an $\ell_2$ ball of radius $r$, and let $\mathbf{h}^\star$ be drawn from $h_1$. Let $h_2$ be any distribution on the same probability space as $h_1$, such that
\begin{equation}
\mathcal{W}_2 ( h_1, h_2 ) \le \sigma \delta^{1/2}.
\label{eq:theorem_ineq}
\end{equation}
Suppose there exists an algorithm that recovers an estimate $\hat{\mathbf{h}}$ of $\mathbf{h}^\star$ using an arbitrary measurement matrix $\mathbf{A}$ that gives $m$ measurements, and noise power $\sigma^2$, such that $\norm{\mathbf{h}^\star - \hat{\mathbf{h}}}_2 \lesssim \sigma$ with probability $1 - O(\delta)$. Then, posterior sampling with respect to $h_2$ and with $ m' \ge O ( m \log ( 1 + \frac{m r^2 \norm{\mathbf{A}}_{\infty}^2}{\sigma^2} ) + \log 1/\delta )$ noisy, Gaussian measurements at noise level $\sigma$ will output $\mathbf{h}_\mathrm{est}$ such that:
\begin{equation}
    \norm{\mathbf{h}^\star - \mathbf{h}_\mathrm{est}}_2 \lesssim \sigma \quad \mathrm{w.p.} \quad 1 - O(\delta).
\end{equation}
\end{theorem}
\begin{proof}
We refer readers to \cite{jalal2021instance} for a complete proof, omitted here due to lack of space.
\end{proof}

From \eqref{eq:theorem_ineq}, there exists a unique value $\delta_\mathrm{min}$ that is a function of $h_1$ and $h_2$, such that:
\begin{equation}
    \delta_\mathrm{min}^2 ( h_1, h_2 ) = \frac{\mathcal{W}_2^2 ( h_1, h_2 )}{\sigma_\mathrm{pilot}^2}.
\end{equation}
Let $\delta_\mathrm{MNR}(h_1, h_2) = \delta_\mathrm{min}(h_1, h_2)$. We have the following lemmas.

\begin{lemma}[Section 1.2, Page 4 from \cite{panaretos2019statistical}]
\label{lemma:bound}
Let $x$ and $y$ be two random variables with marginal distributions $p_x$ and $p_y$, respectively, and arbitrary joint distribution. Let $a$ and $b$ be two independent random variables with the same marginals. We have that:
$$
\mathcal{W}_2^2(x, y) \le \mathcal{W}_2^2(a, b).
$$
\end{lemma}

\begin{lemma}[Section 1.2, Page 7 from \cite{panaretos2019statistical}]
\label{lemma:w2_scalar}
For any two scalar random variables $a$ and $b$ with cumulative distribution functions $F_a$ and $F_b$ respectively, we have that:
$$
\mathcal{W}^2_2(a, b) = \int_{0}^{1} | F_a^{-1}(u) - F_b^{-1}(u) |^2 du.
$$
\end{lemma}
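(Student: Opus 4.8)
This is the classical closed form for quadratic optimal transport on the line, and it may simply be cited from \cite{panaretos2019statistical}; for completeness, here is the argument I would record. Write $\mathcal{W}_2^2(a,b)=\inf_{(X,Y)}\mathbb{E}[(X-Y)^2]$, the infimum running over all couplings of the two laws (pairs $(X,Y)$ on a common probability space with $X\stackrel{d}{=}a$ and $Y\stackrel{d}{=}b$), and let $F_a^{-1},F_b^{-1}$ denote the generalized quantile functions. The plan is to exhibit one coupling attaining the right-hand side and then to show no coupling does better. For the first part, let $U\sim\mathcal{U}(0,1)$; by the probability integral transform (which uses the assumed continuity of $F_a,F_b$) we have $F_a^{-1}(U)\stackrel{d}{=}a$ and $F_b^{-1}(U)\stackrel{d}{=}b$, so $\big(F_a^{-1}(U),F_b^{-1}(U)\big)$ is an admissible coupling and
\[
\mathcal{W}_2^2(a,b)\;\le\;\mathbb{E}\big[(F_a^{-1}(U)-F_b^{-1}(U))^2\big]\;=\;\int_0^1\big|F_a^{-1}(u)-F_b^{-1}(u)\big|^2\,du .
\]

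For the matching lower bound — the real content — I would take an arbitrary coupling $(X,Y)$ and expand $\mathbb{E}[(X-Y)^2]=\mathbb{E}[X^2]+\mathbb{E}[Y^2]-2\,\mathbb{E}[XY]$; since the first two terms are pinned down by the marginals, minimizing the transport cost is equivalent to maximizing $\mathbb{E}[XY]$. I would then invoke Hoeffding's covariance identity,
\[
\mathrm{Cov}(X,Y)=\int_{\mathbb{R}}\!\int_{\mathbb{R}}\Big(\mathbb{P}(X\le s,\,Y\le t)-F_a(s)F_b(t)\Big)\,ds\,dt ,
\]
together with the Fréchet--Hoeffding bound $\mathbb{P}(X\le s,Y\le t)\le\min\{F_a(s),F_b(t)\}$, which holds with equality pointwise precisely for the comonotone coupling $\big(F_a^{-1}(U),F_b^{-1}(U)\big)$. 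Hence that coupling maximizes $\mathbb{E}[XY]$, so it minimizes the cost, and the two displays combine to the asserted equality. Equivalent routes would be Kantorovich duality, or the general principle that on $\mathbb{R}$ the monotone rearrangement is optimal for any cost with $\partial_x\partial_y c\le 0$ — here $\partial_x\partial_y(x-y)^2=-2$.

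Nothing in this is deep. The one step deserving care is the lower bound, i.e.\ certifying that the monotone coupling is genuinely optimal rather than merely feasible, for which Hoeffding's identity plus Fréchet--Hoeffding is the cleanest device; the remaining points (finiteness of the second moments, the behaviour of $F^{-1}$ near $0$ and $1$, and measurability of $u\mapsto F^{-1}(u)$) are routine under the stated continuity hypothesis, and will also be needed downstream when this lemma is applied coordinatewise to the Gaussian tap gains and the exponential tap delays.
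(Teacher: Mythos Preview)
Your argument is correct and is the standard one-dimensional optimal transport proof. The paper itself gives no argument at all --- its proof reads, in full, ``Proven in \cite{panaretos2019statistical}'' --- which is exactly the option you flagged at the outset; your added sketch via the comonotone coupling for the upper bound and Hoeffding's identity plus the Fr\'echet--Hoeffding bound for optimality is sound and more informative than what the paper records.
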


\begin{lemma}[Lemma 1 from \cite{mordant2022measuring}]
\label{lemma:w2_separable}
For any two random vectors $x$ and $y$ with mutually independent entries, we have that:
$$
\mathcal{W}^2_2(x, y) = \sum_i \mathcal{W}_2^2(x_i, y_i).
$$
\end{lemma}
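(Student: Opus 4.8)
The plan is to establish the identity by proving the two inequalities $\mathcal{W}_2^2(x,y) \ge \sum_i \mathcal{W}_2^2(x_i,y_i)$ and $\mathcal{W}_2^2(x,y) \le \sum_i \mathcal{W}_2^2(x_i,y_i)$ separately. Write $\mu$ and $\nu$ for the laws of $x$ and $y$ on $\mathbb{R}^d$; by the independence hypothesis these factor as the product measures $\mu = \mu_1 \otimes \cdots \otimes \mu_d$ and $\nu = \nu_1 \otimes \cdots \otimes \nu_d$, with $\mu_i,\nu_i$ the laws of $x_i,y_i$. I will work from the Kantorovich definition $\mathcal{W}_2^2(x,y) = \inf_{\pi \in \Pi(\mu,\nu)} \int \norm{u-v}_2^2 \, d\pi(u,v)$, the infimum being over all couplings (joint laws with the prescribed marginals), together with the coordinatewise split $\norm{u-v}_2^2 = \sum_i (u_i-v_i)^2$ of the squared Euclidean cost.

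For the lower bound — which does not use independence at all — I would take an arbitrary coupling $\pi \in \Pi(\mu,\nu)$ and push it forward under $(u,v)\mapsto(u_i,v_i)$ to obtain $\pi_i$, which is a coupling of $\mu_i$ and $\nu_i$. By Tonelli's theorem $\int \norm{u-v}_2^2 \, d\pi = \sum_i \int (u_i-v_i)^2 \, d\pi_i \ge \sum_i \mathcal{W}_2^2(x_i,y_i)$, and taking the infimum over $\pi$ gives the desired inequality.

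For the upper bound I would, for each $i$, fix an optimal coupling $\pi_i^\star \in \Pi(\mu_i,\nu_i)$ — these exist under the standing finite second-moment assumptions by the standard existence theory for optimal transport — and form the product coupling $\pi^\star = \pi_1^\star \otimes \cdots \otimes \pi_d^\star$. The point at which independence enters is exactly here: the first marginal of $\pi^\star$ is $\mu_1 \otimes \cdots \otimes \mu_d = \mu$ and the second is $\nu_1 \otimes \cdots \otimes \nu_d = \nu$, so $\pi^\star \in \Pi(\mu,\nu)$; this would fail for general (dependent) $\mu,\nu$. Evaluating the cost, $\mathcal{W}_2^2(x,y) \le \int \norm{u-v}_2^2 \, d\pi^\star = \sum_i \int (u_i-v_i)^2 \, d\pi_i^\star = \sum_i \mathcal{W}_2^2(x_i,y_i)$. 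Combining the two bounds gives the claim.

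The only real obstacle is conceptual rather than computational: recognizing that independence is needed solely to make the product of coordinatewise optimal plans a valid coupling of the joint laws (without it, only the lower bound survives, and the identity can be strict). The remaining items — existence of optimal couplings, measurability of the coordinate projections, and the Tonelli interchange of sum and integral — are routine given that all random variables appearing in the paper (Gaussian tap gains, exponential-type delays) have finite second moments, so I would simply cite the standard optimal-transport literature for them rather than reprove them, matching how the companion Lemma~\ref{lemma:w2_scalar} is handled.
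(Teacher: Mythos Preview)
Your argument is correct and is exactly the standard two-inequality proof: the lower bound via coordinate pushforwards of an arbitrary coupling, and the upper bound via the product of coordinatewise optimal plans, with independence entering only to ensure that this product has the right joint marginals. There is nothing to fix.

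As for comparison with the paper: the paper does not actually prove this lemma at all --- its ``proof'' is the single line ``Proven in \cite{mordant2022measuring}'', deferring entirely to the cited reference, just as it does for Lemma~\ref{lemma:w2_scalar}. So you have supplied strictly more than the paper does; your closing remark about citing the literature for the routine existence/measurability points is in fact how the paper handles the \emph{entire} lemma.
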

\noindent Using Lemma~\ref{lemma:w2_scalar} we compute the closed-form 2-Wasserstein distance between any two scalar random variables. This leads to the following corollaries.
\begin{corollary}
\label{corollary:one}
Let $h_1 \sim \mathcal{CN}(0, \sigma_1^2)$ and $h_2 \sim \mathcal{CN}(0, \sigma_2^2)$ be two independent, complex, zero-mean and circularly symmetric Gaussian random variables. We have that:
$$
\mathcal{W}_2^2(h_1, h_2) = \left( \sigma_1 - \sigma_2 \right)^2.
$$
\end{corollary}
\begin{proof}
Let $h_i=x_i+jy_i$ for $i=1,2$. Using Lemma~\ref{lemma:w2_scalar} we have that:
\begin{align*}
\mathcal{W}_2^2 \left( h_1, h_2 \right) & = \inf_{\psi_1 \sim h_1, \psi_2 \sim h_2} \mathbb{E}\left( \norm{ \psi_1 - \psi_2 }_2^2\right) \\
 & = \inf_{\psi_1 \sim h_1, \psi_2 \sim h_2} \mathbb{E}\left( \norm{ \textrm{Re}\{\psi_1\} - \textrm{Re}\{\psi_2\} }_2^2\right) \\ & \qquad + \mathbb{E}\left( \norm{ \textrm{Im}\{\psi_1\} - \textrm{Im}\{\psi_2\} }_2^2\right) \\
 & = \mathcal{W}_2^2 \left( x_1, x_2 \right) + 
 \mathcal{W}_2^2 \left( y_1, y_2 \right) \\ & = \frac{(\sigma_1 - \sigma_2) ^ 2}{2} + \frac{(\sigma_1 - \sigma_2) ^ 2}{2} = (\sigma_1 - \sigma_2) ^ 2. \qed
\end{align*}
\renewcommand{\qedsymbol}{}
\end{proof}
\begin{corollary}
\label{corollary:two}
Let $\tau_1 = -\alpha_1 \log X_1$ and $\tau_2 = -\alpha_2 \log X_2$, where $X_1$ and $X_2$ are i.i.d. and distributed as $\mathcal{U}(0, 1)$. Then:
$$
\mathcal{W}_2^2(\tau_1, \tau_2) = 2 \left( \alpha_1 - \alpha_2 \right)^2.
$$
\end{corollary}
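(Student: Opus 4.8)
The plan is to reduce everything to the scalar quantile formula of Lemma~\ref{lemma:w2_scalar}, in the same spirit as the proof of Corollary~\ref{corollary:one}, and then to evaluate a single elementary integral. The first step is to identify the law of $\tau = -\alpha \log x$ for $x \sim \mathcal{U}(0,1)$: since $P(\tau \le t) = P(x \ge e^{-t/\alpha}) = 1 - e^{-t/\alpha}$ for $t \ge 0$, the variable $\tau$ is exponentially distributed with mean $\alpha$, with cumulative distribution function $F_\tau(t) = 1 - e^{-t/\alpha}$. Inverting this relation gives the quantile function in closed form, $F_\tau^{-1}(u) = -\alpha \log(1-u)$ for $u \in (0,1)$.

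Next I would substitute these quantile functions into Lemma~\ref{lemma:w2_scalar}. Since $F_{\tau_1}^{-1}(u) - F_{\tau_2}^{-1}(u) = -(\alpha_1 - \alpha_2)\log(1-u)$, the factor $(\alpha_1 - \alpha_2)^2$ pulls out of the integral and we are left with
\[
\mathcal{W}_2^2(\tau_1, \tau_2) = (\alpha_1 - \alpha_2)^2 \int_0^1 \bigl( \log(1-u) \bigr)^2 \, du.
\]
Note that this step only uses the two marginal laws, so the i.i.d.\ assumption on $x_1$ and $x_2$ plays no role: the optimal coupling is the monotone (comonotone) one induced by the quantile maps, independent of the joint law of $x_1, x_2$.

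The remaining task is the elementary integral $\int_0^1 (\log(1-u))^2\,du$. Substituting $v = 1-u$ turns it into $\int_0^1 (\log v)^2\, dv$, and the further substitution $v = e^{-t}$ identifies it with $\int_0^\infty t^2 e^{-t}\, dt = \Gamma(3) = 2$ (equivalently, integrate by parts twice). Plugging this value back yields $\mathcal{W}_2^2(\tau_1, \tau_2) = 2(\alpha_1 - \alpha_2)^2$, as claimed.

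I do not anticipate a genuine obstacle here; the only points requiring care are getting the quantile function right — in particular the argument $1-u$ rather than $u$, which reflects the direction of monotonicity of $F_\tau$ — and tracking the constant $2$ from the final integral. That constant is precisely what produces the asymmetry between the gain term $(\sigma_i^{(1)} - \sigma_i^{(2)})^2$ and the delay term $2(\alpha_i^{(1)} - \alpha_i^{(2)})^2$ in the mismatch-to-noise ratio~\eqref{eq:mnr}, once Corollaries~\ref{corollary:one} and~\ref{corollary:two} are assembled via the separability in Lemma~\ref{lemma:w2_separable}.
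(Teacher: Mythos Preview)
Your proof is correct and follows essentially the same route as the paper: compute the CDF of $\tau$, invert it to get the quantile function $F_\tau^{-1}(u) = -\alpha \log(1-u)$, apply Lemma~\ref{lemma:w2_scalar}, factor out $(\alpha_1-\alpha_2)^2$, and evaluate $\int_0^1 \log^2(1-u)\,du = 2$. The only cosmetic difference is that the paper evaluates the final integral via its antiderivative $t(\log t)^2 - 2t\log t + 2t$, whereas you use the substitution $v=e^{-t}$ to recognize $\Gamma(3)$; both are equally valid.
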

\begin{proof}
Using the definition of the cumulative distribution function, we have:
\begin{align*}
F_{\tau}(u) & = \mathrm{P}\left( \tau \le u \right) = \mathrm{P} \left( -\alpha \log X \le u \right) \\
& = \mathrm{P} \left( \log X \ge - \frac{u}{\alpha} \right) = \mathrm{P} \left( X \ge \exp \left( - \frac{u}{\alpha} \right) \right) \\ & = 1 - \exp \left( -\frac{u}{\alpha} \right),
\end{align*}
\noindent where we use that $X \sim \mathcal{U}(0, 1)$ satisfies $F_X(u) = u$ for all $u \in [0, 1]$. We have that $F^{-1}_{\tau}(v) = -\alpha \log \left( 1 - v \right)$. Using Lemma~\ref{lemma:w2_scalar}:
\begin{align*}
    \mathcal{W}_2^2 \left( \tau_1, \tau_2 \right) & = \int_0^1 | F^{-1}_{\tau_1}(v) - F^{-1}_{\tau_2}(v) |^2 dv \\ & = \int_0^1 | -\alpha_1 \log \left( 1 - v \right) +  \alpha_2 \log \left( 1 - v \right)|^2 dv \\
    & = \left( \alpha_1 - \alpha_2 \right)^2 \int_0^1 \log^2 \left( 1 - v \right) dv \\ & = \left( \alpha_1 - \alpha_2 \right)^2 \int_0^1 \log^2 \left( t \right) dt,
\end{align*}
\noindent where the last equality is by change of variable $t = 1 - v$. Using that $\int_0^1 \log^2 ( t ) dt = ( t ( \log |t| )^2 - 2 t \log |t| + 2t ) \rvert_0^1 = 2$, the proof is completed.
\end{proof}
\noindent Combining Corollary~\ref{corollary:one}, Corollary~\ref{corollary:two}, and Lemma~\ref{lemma:w2_separable}, we obtain that:
$$
\delta_\mathrm{MNR}(a, b) = \sum_{i=1}^K \left( (\sigma_i^{(1)} - \sigma_i^{(2)})^2 + 2 (\alpha_i^{(1)} - \alpha_i^{(2)})^2 \right) / \sigma_\mathrm{pilot^2},
$$ for two channel distributions $a$ and $b$ that satisfy the assumptions as $h_1$ and $h_2$ in Section~\ref{sec:theory}, as well as an additional mutual independence between all tap location and delays. Using Theorem~\ref{theorem:ajil} and Lemma~\ref{lemma:bound}, we obtain that $\delta_\mathrm{MNR}(h_1, h_2) \le \delta_\mathrm{MNR}(a, b)$ and Theorem~\ref{theorem:two} is proved.

\bibliographystyle{IEEEtran}
\bibliography{mybib}

\end{document}